\begin{document}
\markboth{T. Liu et al.}{Mechanism Design for Cloud Computing}

\title{Online Mechanism Design for Cloud Computing}
\author{Tie-Yan Liu
\affil{Microsoft Research Asia, Beijing }
Weidong Ma
\affil{Microsoft Research Asia, Beijing  }
Tao Qin
\affil{Microsoft Research Asia, Beijing  }
Pingzhong Tang
\affil{Tsinghua University, Beijing }
Bo Zheng
\affil{Tsinghua University, Beijing }}

\begin{abstract}
In this work, we study the problem of online mechanism design for resources allocation and pricing in cloud computing (RAPCC). We show that in general the allocation problems in RAPCC are NP-hard, and therefore we focus on designing dominant-strategy incentive compatible (DSIC) mechanisms with good competitive ratios compared to the offline optimal allocation (with the prior knowledge about the future jobs).
We propose two kinds of DSIC online mechanisms. The first mechanism, which is based on a greedy allocation rule and leverages a priority function for allocation, is very fast and has a tight competitive bound. We discuss several priority functions including exponential and linear priority functions, and show that the former one has a better competitive ratio. The second mechanism, which is based on a dynamic program for allocation, also has a tight competitive ratio and performs better than the first one when the maximum demand of cloud customers is close to the capacity of the cloud provider.
\end{abstract}

\category{J.4}{Computer Applications}{Social and Behavioral Sciences-Economics}

\terms{Economics, Theory}

\keywords{Online mechanism design, competitive analysis,
incentive compatible}
\maketitle

\section{Introduction}

Cloud computing is transforming today's IT industry. It offers fast and flexible provisioning of online-accessible computational resources to its customers, thus greatly increases the plasticity and reduces the cost of IT infrastructure. In a cloud computing platform, different kinds of resources are provided to customers, including computing power, storage, bandwidth, database, software, and analytic tools. The statistical multiplexing necessary to achieve elasticity and the illusion of infinite capacity require each of these resources to be virtualized to hide their implementation details. Therefore, a main approach to sell cloud computing resources is through virtual machines (also referred to as {\em instances}): customers can buy and pay for a certain number of virtual machines according to the time of utilization.

A practical problem faced by a cloud service provider is how to appropriately allocate resources and charge customers so as to achieve a balance between profit making and customer satisfaction. Weinhardt et al.,~\citeyear{weinhardtcloud} even claimed that the success of cloud computing in the IT market can be obtained by sorely developing adequate pricing techniques.

\subsection{Existing Pricing Schemes in Cloud Computing}
The most commonly-used pricing scheme in today's cloud computing market is the so-called \emph{pay-as-you-go} model, with which customers pay a fixed price per unit of usage. ~\cite{amazon} utilizes this model and charges a fixed price for each hour of virtual machine usage. Other leading cloud computing products such as ~\cite{azure} and ~\cite{app} also support this pricing model.

\emph{Subscription} is another commonly employed pricing scheme in cloud computing,  with which a customer pays in advance for the services he/she is going to receive for a pre-defined period of time, with a pre-defined fixed price.

Both the \emph{pay-as-you-go} and \emph{subscription} models belong to fixed-price mechanisms with which customers play a passive role. Fixed-price mechanisms are easy to implement. However, they may not be optimal in terms of resource utilization since the demands are dynamically changing. For example, in peak hours, suppose all the resources have been taken by some customers; then even if a new costumer has an emergent task, he/she cannot get the desired resources no matter how much he/she is willing to pay. In  this regard, dynamic and market-based pricing mechanisms are better choices in regulating the supply-demand relationship at market equilibria, and providing satisfactory resource allocation compatible to economic incentives.

As a quick and efficient approach to selling goods at market value, auction-style pricing mechanisms have been widely applied in many fields, reflecting the underlying trends in demand and supply. In fact, an auction-style pricing mechanism  has been adopted by Amazon to dynamically allocate spot instances\footnote{\url{http://aws.amazon.com/ec2/purchasing-options/spot-instances/}} to potential customers. The main advantage of spot instance lies in that it can greatly save the cost of customers because the spot price is usually far below the fixed price. Its disadvantage is the limited applicability, specifically only for those interruption-tolerant tasks: the spot price will go up when more customers come in and a current running task may be interrupted if its bid price is lower than the new spot price. This clearly closes the door to more tasks that are not interruptible, and asks for new kinds of auction-style pricing mechanisms to be invented. This is exactly our focus in this work.

\subsection{Online Mechanism Design for Cloud Computing}

In this paper we study the problem of designing dominant-strategy incentive compatible (DSIC) mechanisms for resource allocation and pricing in cloud computing (RAPCC). In particular, we consider a specific setting of auctions as shown below, which can reflect the nature of cloud computing and distinguish our work from previous studies on auction mechanism design. Please note that designing auction mechanisms in this setting is generally difficult since it combines the challenges of mechanism design (i.e., ensuring incentive compatibility) with the challenges of online algorithms (i.e., dealing with uncertainty about future inputs)~\cite{hajiaghayi2005online}.

\begin{enumerate}
\item A cloud provider has a fixed capacity (denoted as $C\in\mathbb{N}$), i.e., a fixed number of virtual machines (referred to as {\em instances}) in an infinite time interval $T=[0, \infty)$.
\item Customers come and go over time. Each cloud customer has a job to run in the cloud. On behalf of a cloud customer, an agent submits his/her job to the cloud.\footnote{Since customer, job, and agent have one-to-one correspondence in our setting, we will use these terms interchangeably in the following sections of this paper.}
\item An online mechanism is used to determine how to allocate the instances to the agents and how to charge them, without knowledge of future agents who will subsequently arrive.
\item The mechanism is designed to be incentive compatible and to (approximately) maximize the  efficiency (social welfare) of the cloud computing system.
\end{enumerate}

To be more specific, we explain the details of the above setting as follows.

We use $J$ to represent the set of jobs.  Let $r_i$ be the release time of job $i$ and $d_i$ be the deadline of the job. The private information (i.e., {\em type}) of agent $i$ is characterized by a tuple $\omega_i=( n_i, l_i, v_i)\in\Omega$, where $n_i$ is the number of instances required by the job, $l_i$ is the length of time required by running the job, and $v_i$ is the value that the agent can get if the job is completed. Here we say a job $i$ is {\em complete} if it is allocated with at least $n_i$ instances for $l_i$ units of time continuously between its release time $r_i$ and deadline $d_i$. The space $\Omega$ consists of all possible agent types. Note that types are private information: agent $i$ observes its type only at time $r_i$, and nothing is known about the job before $r_i$.

Note that in our setting, interruption of jobs is allowed but the interrupted job gets zero value. Once a job is interrupted, the resources already spent on the job are wasted: when the job is restarted, another $l_i$ units of time will be needed for its completion. Partial allocation is not legitimate, i.e., allocating $x_i$ instances to agent $i$ is useless if $x_i<n_i$. The deadline of each job is hard, which means that no value is obtained for a job that is completed after its deadline.

Similar to \cite{ng2003virtual}, we do not consider agents' temporal strategies and assume that agents will not misreport $r_i$ and $d_i$ in this work. We leave agents' temporal strategies to future work. For convenience, we usually combine $r_i$, $d_i$ with $\omega_i$ and denote them by $\theta_i$, i.e., $\theta_i=(r_i, d_i, n_i, l_i, v_i)$.

In practice, a cloud platform usually specifies the shortest and longest lengths of a job, and directly rejects those jobs whose lengths are out of this range. To reflect this and without loss of generality, we assume the minimum and maximum length of a job to be $1$ and $\kappa$ respectively. For the convenience of analysis, we assume $\kappa$ is an integer.

We study direct revelation mechanisms ~\cite{myerson1981optimal}, in which each agent participates by simply reporting his/her type. Please note that agents are selfish and rational. Therefore, they may misreport their type in order to be better off.  We use $\hat{\theta_i}$$=(r_i, d_i, \hat{n}_i, \hat{l}_i, \hat{v}_i)$ to represent agent $i$'s report. It is easy to see that the misreport of a shorter length is a dominated strategy; otherwise, his/her job cannot be completed when the provider allocates $\hat{n}_i$ instances with time length $\hat{l}_i<l_i$. Therefore, we assume that agents will not misreport a shorter length. Based on the reports of the agents, the mechanism determines how to allocate and price the computing resources.

Let $x$ be an allocation function and $x_i(t)$ be the number of instances allocated to job $i\in I(t)$ at time $t$, where $I(t)$ is the set of jobs available to the mechanism at time $t$. We say $x$ is feasible if $\sum_{i\in I(t)}x_i(t)\le C, \forall t$. For a job $i$ and an allocation function $x$, let $q_i(x) = 1$ if it is completed, otherwise $q_i(x) = 0$. The value of agent $i$ extracted from allocation $x$ is $q_i(x)v_i$. The efficiency (social welfare) of the allocation function $x$ is $W(x)= \sum_iq_i(x)v_i$.

Let $p$ be a payment rule and $p_i$ be the amount of money agent $i$ needs to pay to the cloud service provider. We assume that agents have quasi-linear utilities, i.e., the utility of
agent $i$ for the allocation function $x$ and the payment rule $p$ is $u_i(x,p)=q_i(x)v_i-p_i$.

A mechanism $M=(x,p)$ is said to be {\em dominant-strategy incentive compatible (DSIC)} if, for any agent $i$, regardless of the behaviors of other agents, truthfully reporting his/her own type can maximize his/her utility. The mechanism is said to be {\em individual rational} if for each job $i$, $u_i(x,p)\geq 0$.

Hajiaghayi et al.~\citeyear{hajiaghayi2005online} provide a simple characterization for DSIC mechanisms by monotonicity, which is rephrased as Lemma \ref{lem:mono}.

\begin{definition}\label{mono}
We say that a type $\theta_i=(r_i, d_i, n_i, l_i, v_i)$ {\em dominates} the type $\theta'_i=(r'_i, d'_i,  n'_i, l'_i, v'_i)$, denoted $\theta_i\succ\theta'_i$
if $r_i\leq r'_i$, $d_i\geq d'_i$, $n_i\leq n'_i$, $l_i\leq l'_i$, and $v_i>v'_i$.
An allocation function $x$ is {\em monotone} if for every agent $i$, we have $q_i(x_i(\theta_i,\theta_{-i}))\geq q_i(x_i(\theta'_i,\theta_{-i})), \forall \theta_i\succ\theta'_i , \forall\theta_{-i}$.
\end{definition}

\begin{lemma}\label{lem:mono}\cite{hajiaghayi2005online}
For any allocation function $x$, there exists a payment rule $p$ such that the mechanism $(x, p)$ is DSIC if and only if $x$ is monotone.
\end{lemma}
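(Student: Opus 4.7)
The plan is to prove the two directions of the characterization separately. The ``if'' direction is constructive via threshold (critical-bid) payments, while the ``only if'' direction follows from a standard swap argument adapted to the restricted misreport space of this model.

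For the ``if'' direction, assume $x$ is monotone. Fix an agent $i$ and reports $\theta_{-i}$. For each fixed $(r_i, d_i, n_i, l_i)$, the value-component of monotonicity implies that $v \mapsto q_i(x((r_i, d_i, n_i, l_i, v), \theta_{-i}))$ is non-decreasing in $v$, so the threshold $v^*(\theta_{-i}; r_i, d_i, n_i, l_i) := \inf\{v : q_i = 1\}$ is well defined. I would set $p_i := v^* \cdot q_i(x)$, the natural critical-bid payment. Truthful reporting then yields utility $(v_i - v^*) q_i \geq 0$. To rule out deviations, recall the paper's restrictions on misreports: $r_i, d_i$ cannot be misreported, and $\hat{l}_i < l_i$ or $\hat{n}_i < n_i$ are weakly dominated since partial allocations yield no value. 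Any remaining misreport $\hat{\theta}_i$ with $\hat{q}_i = 0$ yields zero utility; a misreport with $\hat{q}_i = 1$ actually completes the true job because $\hat{n}_i \geq n_i$ and $\hat{l}_i \geq l_i$, so the agent earns $v_i - v^*(\theta_{-i}; r_i, d_i, \hat{n}_i, \hat{l}_i)$. Monotonicity in the $n$ and $l$ dimensions guarantees that the threshold is non-decreasing in those arguments, so the truthful threshold $v^*(\theta_{-i}; r_i, d_i, n_i, l_i)$ is no larger, and truthful reporting dominates.

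For the ``only if'' direction, assume $(x, p)$ is DSIC and take $\theta_i \succ \theta'_i$ with $v_i > v'_i$. I would insert an intermediate type $\tilde{\theta}_i := (r_i, d_i, n'_i, l'_i, v_i)$ and decompose the comparison into a parameter-widening step $\theta_i \to \tilde{\theta}_i$ (same value, weaker $n, l$) followed by a pure-value step $\tilde{\theta}_i \to \theta'_i$ (same $(n', l')$, lower value). For the pure-value step, both directions of misreport are permitted; writing the two DSIC inequalities and adding them yields $(q_i(\tilde{\theta}_i) - q_i(\theta'_i))(v_i - v'_i) \geq 0$, hence $q_i(\tilde{\theta}_i) \geq q_i(\theta'_i)$. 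For the parameter-widening step, only one direction of misreport is permitted (true $\theta_i$ can claim $\tilde{\theta}_i$, but not vice versa); the one-sided DSIC inequality, combined with a small perturbation of the value coordinate to invoke the already-established pure-value monotonicity, forces $q_i(\theta_i) \geq q_i(\tilde{\theta}_i)$. Chaining the two steps (and handling the $(r_i, d_i)$ coordinates trivially, since these are assumed truthful) gives the required monotonicity.

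The main obstacle is precisely the asymmetry of the feasible misreport space: the classical two-sided swap argument is unavailable whenever $n$ or $l$ strictly change, because reporting a shorter length or fewer instances is dominated and hence ruled out. Engineering the chain-of-intermediates so that every link is either a pure-value step (both swaps feasible) or a payment-bounded one-sided step is the delicate part. If this cannot be arranged cleanly, a fallback is to invoke the threshold payment rule constructed in the ``if'' direction and read monotonicity off the payment formula via a Myerson-style uniqueness-of-payment argument, which is what I would ultimately lean on if the direct chain argument proves brittle.
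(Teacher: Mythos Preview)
The paper does not prove this lemma at all: it is stated with the citation \cite{hajiaghayi2005online} and used as a black box. There is therefore no ``paper's own proof'' to compare your proposal against.

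Your sketch is the standard Myerson-style characterization adapted to this multi-parameter setting, and the ``if'' direction via critical-value payments is exactly what the paper implicitly relies on later (it charges each completed agent the critical value). One caution on your ``only if'' direction: the paper only rules out under-reporting of $l_i$ explicitly (and, by the same logic, of $n_i$), but it also \emph{assumes} $r_i$ and $d_i$ are truthfully reported, so DSIC here is only with respect to misreports of $(n_i,l_i,v_i)$ with $\hat n_i\ge n_i$ and $\hat l_i\ge l_i$. Under that restricted deviation space, the one-sided swap you flag as ``delicate'' is genuinely the only tool available for the $(n,l)$ coordinates, and your perturbation idea is the right way to close it. Since the paper simply imports the result, any correct argument along these lines would be acceptable as a proof of the cited lemma.
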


We are interested in designing DSIC and individual rational mechanisms. In addition, we also would like the mechanism to have good performance in (approximately) maximizing the social welfare of the cloud computing system.

In particular, we use the concept of \emph{competitive ratio} \cite{lavi2000competitive} to evaluate the performance of a mechanism (see Definition \ref{def1}), which compares the social welfare implemented by the mechanism (without any knowledge of future jobs) with that of the optimal offline allocation (with the prior knowledge of future jobs).

\begin{definition}\label{def1}
An online mechanism $M$ is (strictly)
$c$-{\em competitive} if there does not exist an job sequence $\theta$ such that
$c\cdot W(M, \theta)<OPT(\theta)$, where $OPT(\theta)$ denotes the social welfare of the optimal offline allocation. Sometimes we also say that $M$ has a competitive ratio of $c$.
\end{definition}

\subsection{Our Results}
The main results of our work are summarized as follows.
\begin{enumerate}
\item (Section \ref{hlb}) We show that the allocation problem in our setting is NP-hard through a reduction from the Knapsack problem to our problem.

\item (Section~\ref{gbm}) We design a DSIC mechanism $\Gamma_G$ based on the greedy algorithm proposed for the Knapsack problem~\cite{vazirani2001approximation}. In $\Gamma_G$, we assign a priority score to each active job and then allocate resources based on the virtual values of the active jobs computed from priority scores. We study several different priority functions and obtain the following results.
\begin{enumerate}
\item When assigning priorities according to an exponential function, the competitive ratio of $\Gamma_G$ is tightly bounded by $\frac{h}{h-1}\cdot \frac{\chi}{1-\chi^{-1/\kappa}}+1$ when $h\geq 2$, where $h$ is the rounded ratio between the capacity $C$ and the maximum number of instances demanded  by a customer,  and $\chi>1$ is the base of the exponential function. Specifically, when we choose $\chi=(\frac{\kappa+1}{\kappa})^\kappa$, $\Gamma_G$ achieves the best competitive ratio of $\frac{h}{h-1}\cdot (\kappa+1)(1+\frac{1}{\kappa})^{\kappa}+1$.  And for the special case with the capacity $C= 1$ (which is identical to the conventional online real-time scheduling problem), the competitive ratio of $\Gamma_G$ is tightly bounded by $\frac{\chi}{1-\chi^{-1/\kappa}}+1$.
\item When assigning priorities according to a linear function, the competitive ratio of $\Gamma_G$ is lower bounded by $\frac{h}{h-1}\cdot (\sqrt{2\kappa(\kappa+1)}+\frac{3}{2}\kappa+\frac{1}{2})+1$. This result implies that the exponential priority is better than the linear priority, since $\sqrt{2\kappa(\kappa+1)}+\frac{3}{2}\kappa+\frac{1}{2}$ is greater than $(\kappa+1)(1+\frac{1}{\kappa})^{\kappa}$.
\item When assigning priorities according to a general non-decreasing function $f(\delta)$, the competitive ratio of $\Gamma_G$ is lower bounded by $\frac{h}{h-1}\cdot(\sqrt{\kappa}+1)^2+1$, when $f(\delta)$  satisfies
$f(0)=1$, where $\delta$ is the completed fraction of a job.
\end{enumerate}

\item (Section \ref{dpm}) We design another DSIC mechanism $\Gamma_D$ based on the dynamic program proposed for the Knapsack problem \cite{martello1990knapsack}. This mechanism has
a competitive ratio of exactly $n_{\max}\cdot \frac{\chi}{1-\chi^{-1/\kappa}}+1$, where $n_{\max}$ is the maximum number of instances demanded by a customer. Comparing the $\Gamma_G$, this mechanism has a much better competitive ratio when $n_{\max}$ is close to the capacity $C$.

\end{enumerate}

\subsection{Related Work}
\cite{lavi2000competitive} coined the term ``online auction'' and initiated the study
of incentive compatible mechanisms in dynamic environments with the computer science
literature. \cite{friedman2003pricing} initiated the study of VCG-based online
mechanisms and coined the term ``online mechanism design''. Later on, MDP-based approaches \cite{parkes2003mdp, parkes2004approximately} have been applied to study the online VCG mechanism, in which prior knowledge is assumed about the future arrivals. Different from those works, the online setting concerned here is model-free (i.e., no knowledge is assumed about future), since cloud computing is a very dynamic environment and it is difficult to predict future jobs, especially in the auction-style setting.

Model-free online setting has been studied in \cite{porter2004mechanism, hajiaghayi2005online}, which design DSIC mechanisms for online scheduling of a single, re-usable resource. A competitive ratio of $(\sqrt{k}+1)^2+1$ is achieved  with respect to the optimal efficiency in \cite{porter2004mechanism}, where $k$ is the ratio of maximum to minimum value density (value divided by processing time) of a job, and the ratio is proved to be optimal for deterministic mechanisms. \cite{ hajiaghayi2005online} provided a randomized mechanism whose efficiency competitive ratio is $O(log(\kappa))$ (recall that $\kappa$ is the ratio of the maximum job length to the minimum job length. Unlike these works, in our problem, we have multiple instances to sell at each time step and each job demands multiple instances and multi-unit time.

Recently, \cite{gerding2011online, robu2012online, robu2013online} studied online mechanisms for electric vehicle charging. In this problem, agents are assumed to be with multi-unit demand and non-increasing marginal values. That is, the first unit allocated to an agent have a higher (or equal) marginal value for the agent compared to any subsequent units. The difference between those works and our model lies in the definition of agents' utilities: in our problem an agent can get value if and only if his/her job is fully completed, while in their problems, an agent can collect value even if his/her demand is only partially fulfilled.  Note that our setting is closer to cloud computing, in which agents want their jobs fully completed. Therefore, the mechanisms designed in those works do not work for our problem.

Another related work is \cite{ng2003virtual}, which studied  the problem of designing fast and incentive-compatible exchanges for dynamic resource allocation problems in distributed
systems. Different from our work, they considered a two-sided market, in which there are both request agents (consuming resources) and service agents (providing resources) coming sequentially. Their setting on the side of request agents is very similar to us: they ignored the temporal strategies and considered a three dimensional type (i.e., size of the job, length of the time, and the value of the job) for request agents. Because their model is more complex and needs to consider both the strategies of buyer side and seller side, they focused on designing incentive compatible mechanisms without theoretical analysis on the efficiency of the mechanism. In contrast, we design DSIC mechanisms for our problem and (almost) tight competitive bounds are obtained.

\section{Computational Complexity}\label{hlb}
Before presenting our mechanisms, we first consider the computational complexity of the allocation
problem in our model.
\begin{theorem}\label{the:np}
The allocation problem in our model is NP-hard. More precisely, the decision problem of whether the optimal allocation
has social welfare of at least $k$ (where $k$ is an additional part of the input) is
NP-complete.
\end{theorem}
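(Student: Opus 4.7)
My plan is to prove both NP-hardness (via reduction from Knapsack) and membership in NP (via a succinct certificate), as suggested by the problem statement.

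For the hardness direction, I would reduce the standard 0/1 Knapsack decision problem to our allocation decision problem. Given a Knapsack instance with items $1, \ldots, m$, sizes $s_1, \ldots, s_m$, values $v_1, \ldots, v_m$, capacity $B$, and target $k$, I would construct the following instance of our allocation problem. Set the cloud capacity to $C = B$. For each Knapsack item $i$, create a single job with release time $r_i = 0$, deadline $d_i = 1$, instance demand $n_i = s_i$, length $l_i = 1$, and value $v_i$. Since every job must run for the full unit interval $[0,1]$ in order to be completed, a feasible allocation corresponds exactly to choosing a subset $S \subseteq \{1, \ldots, m\}$ of jobs to run with $\sum_{i \in S} n_i \leq C$, yielding welfare $\sum_{i \in S} v_i$. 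Hence an allocation with welfare at least $k$ exists if and only if the Knapsack instance is a yes-instance. The reduction is clearly polynomial.

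For NP membership, the main subtlety is that allocation functions $x_i(t)$ are a priori defined over continuous time, so a naive certificate might not be polynomially bounded. To handle this, I would first argue a normal-form lemma: without loss of generality, every completed job in an optimal (or any witnessing) allocation can be scheduled on a contiguous interval $[a_i, a_i + l_i] \subseteq [r_i, d_i]$ where $a_i$ is drawn from the polynomial-size set of ``breakpoints'' generated by $\{r_j, d_j, r_j + l_j, d_j - l_j : j \in J\}$. Then a certificate consists of the subset of jobs to complete together with the integer-indexed start time for each, which has polynomial size and can be verified in polynomial time by checking deadline feasibility and that the instantaneous capacity constraint $\sum_{i : a_i \leq t < a_i + l_i} n_i \leq C$ holds across the polynomial number of breakpoints.

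I expect the reduction itself to be straightforward; the main obstacle is the normal-form/verification argument for NP membership, which is technical but standard for scheduling-style problems and does not involve deep ideas. One could also sidestep this issue entirely by restricting attention to integer-time instances (which is sufficient since the hard Knapsack instances produced by the reduction only use unit-length jobs on $[0,1]$), in which case NP membership is immediate and the theorem as stated still follows.
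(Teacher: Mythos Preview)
Your NP-hardness reduction is essentially identical to the paper's: both set $C$ equal to the knapsack capacity and map each item $i$ to a job with $r_i=0$, $d_i=1$, $n_i=s_i$, $l_i=1$, value $v_i$, so that completing a subset of jobs corresponds exactly to packing a subset of items.

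The one place you go beyond the paper is NP membership: the paper's proof only establishes hardness and then simply asserts NP-completeness, without addressing the continuous-time certificate issue you raise. Your breakpoint normal-form argument (or the fallback to integer-time instances) fills a gap the paper leaves open, so in that respect your proposal is more complete than the original.
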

\begin{proof}
We show that any knapsack problem can be reduced to the allocation problem in our model.

Consider a knapsack with size $C\in\mathbb{Z}^+$ and a set of items denoted as $S=\{1,\ldots, n\}$. Each item $i$ in the set has size $s_i\in \mathbb{Z}^+$ and profit $v_i\in\mathbb{R}^+$. The knapsack problem is whether one can pack a subset of items into the knapsack with total profit greater than $k$.

Given such an instance of knapsack problem, we will build a cloud resource allocation problem from it as follows:
A cloud provider has $C$ virtual machines. There are $n$ agents, and agent $i$'s type is $\theta_i=\{0, 1, s_i,1, v_i\}$. Now notice that a yes/no answer to the decision problem of the cloud resource allocation corresponds to a yes/no answer to the decision problem of knapsack problem, and vice versa. Since the knapsack problem is NP-complete, this concludes the NP-hardness of the allocation problem in cloud computing.
\end{proof}

\section{A Greedy Mechanism}\label{gbm}
In this section, we design a greedy mechanism for resource allocation and pricing in cloud computing (RAPCC) and prove its competitive efficiency.

For any time $t$, we use $\delta _i\leq 1$ to denote the fraction that job $i$ has been continuously processed before time $t$ (i.e. he/she has received an allocation at time $t - \delta_i l_i$ and has not been interrupted after that), and we call $\delta _i$ the rate of completeness.
We say a job $i$ is \emph{feasible} at time $t$ if
\begin{enumerate}
\item it has been released before $t$, i.e., $r_i\le t$;
\item it has enough time to be completed before its deadline, i.e., $d_i-t\geq (1-\delta_i)l_i$; and
\item it has not been completed yet, i.e.,  $\delta_i<1$.
\end{enumerate}
We use $J_F(t)$ to denote the set of feasible jobs at $t$.

The basic idea of the proposed mechanism is that we assign a priority score to each feasible job, compute a virtual value for each feasible job, allocate the resources to the feasible jobs according to their virtual values at each critical time point, and charge each agent at his/her deadline according to his/her critical value \cite{porter2004mechanism} if his/her job is completed. We say $t$ is a \emph{critical time point} if some new job arrives at time $t$ or some existing job is completed at time $t$. Given an allocation function, the \emph{critical value}  of a job is the minimum reported value that ensures it can be completed by its deadline. Note that we do not charge an agent if his/her job is not completed before his/her deadline.

\begin{algorithm}[H]
\For{all critical time point $t$ in the ascending order}{
$J_F\leftarrow J_F(t)$\;
$\forall i \in J_F$, update its virtue value density $\rho'_i=\frac{v_i}{n_i}f(\delta_i)$ and virtue value $v'_i=v_if(\delta_i)$\;
Re-number jobs in $J_F$ by the descending order of $\rho'_i$\;
\eIf{there exists $k$ such that the size of the first $k$ jobs exceeds $C$}
{
\eIf{$\sum_{i=1}^{k-1}v'_i\geq v'_k$}{
Run the job set$\{1,\ldots, {k-1}\}$ \;}{Run job $k$\;}}
{Run the job set $J_F$\;}
}
\caption{The greedy allocation rule of $\Gamma_G$}
\end{algorithm}

The allocation rule\footnote{Since the payment rule is very simple, we omit it.} of mechanism $\Gamma_G$ is shown in Algorithm 1, in which $f()$ is a non-decreasing priority function satisfying $f(0)=1$.

There can be different ways to assign priority scores to jobs. We study three priority functions in the following subsections.

\subsection{Exponential Priority Functions}
In this subsection, we study the mechanism $\Gamma_{G}$ with an exponential priority function:
  $$f(\delta) = \chi^{\delta},$$
where $\chi>1$ is an input parameter.

It is easy to see that with such a priority function, the allocation rule is monotone. According to Lemma \ref{lem:mono}, the mechanism $\Gamma_{G}$ is dominant-strategy incentive compatible.

Next, we prove a tight competitive ratio for the mechanism.

\begin{theorem}\label{main1}
Assume $C\geq h\cdot n_{\max}$, where $h\geq 2$ is an integer. The competitive ratio of $\Gamma_G$ with an exponential priority function is $\frac{h}{h-1}\cdot \frac{\chi}{1-\chi^{-1/\kappa}}+1$.
\end{theorem}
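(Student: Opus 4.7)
The plan is a charging argument. Let $A$ denote the set of jobs completed by $\Gamma_G$ and $B$ the set completed by the offline optimum but not by $\Gamma_G$; then $W(\mathrm{OPT}) \le W(\Gamma_G) + W(B)$, so it suffices to show
\[
W(B) \le \frac{h}{h-1}\cdot\frac{\chi}{1-\chi^{-1/\kappa}}\cdot W(\Gamma_G).
\]
The two factors on the right track two sources of loss that I would establish separately and then combine multiplicatively: the \emph{static} greedy-knapsack gap at a single critical time, and the \emph{online} preemption overhead accumulated across critical times.

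For the static factor, I would observe that at each critical time at which the greedy rule chooses between $\{1,\dots,k-1\}$ and the single overflowing job $k$, the hypothesis $C\ge h\cdot n_{\max}$ gives $n_k\le C/h$ while $\sum_{i<k}n_i>C(h-1)/h$; since items are sorted by virtual density, this forces $v'_k\le\frac{1}{h-1}\sum_{i<k}v'_i$, and the classical $2$-approximation of greedy knapsack tightens to $\frac{h}{h-1}$. For the online factor, I would associate to each $j\in B$ a preemption chain $j=j_0,j_1,\ldots,j_m$ with $j_m\in A$, in which each link arises from a critical time at which $\Gamma_G$ selected $j_t$ over $j_{t-1}$, enforcing $\frac{v_{j_t}}{n_{j_t}}\chi^{\delta_{j_t}}\ge\frac{v_{j_{t-1}}}{n_{j_{t-1}}}\chi^{\delta_{j_{t-1}}}$. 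Because each job has length at most $\kappa$, a job preempted rather than completed must have advanced $\delta$ by at least $1/\kappa$ while running, so telescoping the density inequalities along the chain and summing the resulting geometric series $\sum_{t\ge 0}\chi^{-t/\kappa}=\frac{1}{1-\chi^{-1/\kappa}}$, combined with the bound $\chi^{\delta_{j_m}}\le\chi$ on the terminal link, yields the per-chain factor $\frac{\chi}{1-\chi^{-1/\kappa}}$.

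The main obstacle I anticipate is the chain bookkeeping under multi-unit demand. In the single-machine case ($C=1$) a chain is a linear sequence of jobs, but here each blocked $j$ can be displaced by a whole \emph{set} of jobs simultaneously occupying the capacity just above $C-n_j$, and each of those may spawn its own sub-chain. I would handle this by charging on a per-unit-capacity basis, weighting each chain link by the $n_i$ shares of capacity it consumes, so that the static and online multipliers separate cleanly and no completed job in $A$ is charged more than a bounded amount of capacity integrated over the lifetime of its execution. Summing the per-job inequalities over $B$ then produces the required bound, and adding the trivial $W(A)\le W(\Gamma_G)$ contribution accounts for the additive $+1$ in the competitive ratio.
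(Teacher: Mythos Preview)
Your high-level plan---separate the static greedy-knapsack gap at a fixed critical time from the online preemption loss, then multiply---matches the paper's strategy, and your derivation of the $\tfrac{h}{h-1}$ factor is essentially correct (indeed, when $h\ge 2$ the branch ``run only job $k$'' of the greedy rule can never fire, so at least $\tfrac{h-1}{h}C$ instances are busy whenever a feasible job is blocked). There are, however, two genuine gaps.

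First, the theorem asserts an \emph{equality}, so you also need a matching lower-bound instance. The paper supplies an explicit adversarial sequence---successive groups of long jobs of length $\kappa$ whose densities are tuned so each group preempts the previous one just before completion, interleaved with groups of short jobs of length $1$---on which $\Gamma_G$ finishes only one long group while OPT finishes all short jobs plus two long groups. Your proposal omits this direction entirely.

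Second, and more seriously, your derivation of the online factor $\tfrac{\chi}{1-\chi^{-1/\kappa}}$ does not work. The assertion ``a job preempted rather than completed must have advanced $\delta$ by at least $1/\kappa$ while running'' is false: a running job can be preempted after an arbitrarily short time, so $\delta$ may increase by an arbitrarily small amount at each link. Telescoping the density inequalities along a single chain therefore gives only $v_{j_0}/n_{j_0}\le \chi^{-\sigma/\kappa}\,v_{j_m}/n_{j_m}$, where $\sigma$ is the total \emph{elapsed time} along the chain, not $\chi^{-m/\kappa}$ with $m$ the number of links. In the correct argument the geometric series $\sum_{t\ge 0}\chi^{-t/\kappa}$ does not index chain links at all; it indexes the \emph{start-time offsets in OPT} of the many jobs $j\in B$ whose chains terminate at the same $j_m\in A$. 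Those start times, restricted to the capacity occupied under a fixed $\tfrac{h-1}{h}C$ threshold, are spaced at least $l_{\min}=1$ apart, and it is this spacing---combined with the per-unit-capacity normalization you allude to---that produces the series. The extra factor $\chi$ in the numerator likewise does not come from ``$\chi^{\delta_{j_m}}\le\chi$ on the terminal link'' (at the moment $j_m$ preempts $j_{m-1}$ one has $\delta_{j_m}=0$); it comes from OPT jobs whose start time lies \emph{after} $j_m$ begins (offsets $\sigma\in(-l_{j_m},0]$), during which $j_m$'s own priority has grown. Without reorganizing the sum around OPT start-time offsets rather than chain depth, the bookkeeping you sketch cannot close.
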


We prove the theorem with two lemmas. First, we use an example to prove the competitive ratio of $\Gamma_{G}$ is at least $\frac{h}{h-1}\cdot \frac{\chi}{1-\chi^{-1/\kappa}}+1$. Then, we prove the competitive ratio is upper bounded by $\frac{h}{h-1}\cdot \frac{\chi}{1-\chi^{-1/\kappa}}+1$.

\begin{lemma}\label{lem1}
The competitive ratio of $\Gamma_{G}$ with an exponential priority function is at least $\frac{h}{h-1}\cdot \frac{\chi}{1-\chi^{-1/\kappa}}+1$.
\end{lemma}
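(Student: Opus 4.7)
The plan is to construct an adversarial job sequence on which $W(OPT)/W(\Gamma_G)$ reaches the claimed ratio, combining a ``decoy-plus-interloper'' template inside one $n_{\max}$-slot with a multi-round geometric iteration that exploits the $h\cdot n_{\max}$ capacity.

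In each round I would release a long decoy of size $n_{\max}$, length $\kappa$, deadline $\kappa$, and value $V$, together with short interloper jobs $S_1,S_2,\ldots$ of size $n_{\max}$ and length $1$ at unit times, whose values $v_{S_j}=V\chi^{(j-1)/\kappa}-\epsilon$ sit just below the decoy's rising virtual value density. By Lemma~\ref{lem:mono} style monotonicity the algorithm is deterministic, so we can read off its behaviour: $\Gamma_G$ never switches and completes only the decoy in this slot, whereas offline drops the decoy and chains the interlopers back-to-back in the same slot, collecting $V\sum_{j\ge 1}\chi^{(j-1)/\kappa}-O(\epsilon)$. The remaining $(h-1)n_{\max}$ of capacity is filled by companion jobs that are allocated identically by $\Gamma_G$ and offline, so the gap between the two is confined to the single decoy slot.

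To stretch the one-round shortfall $\frac{\chi-1}{\chi^{1/\kappa}-1}V$ into the infinite geometric tail $\frac{\chi}{1-\chi^{-1/\kappa}}V$ and to make the multiplier $\frac{h}{h-1}$ appear cleanly, I would iterate the construction over many back-to-back rounds with decoy and companion values scaled geometrically across rounds. The scaling is chosen so that $\Gamma_G$'s per-round accumulated value forms one geometric series and the offline-minus-$\Gamma_G$ interloper extras form a second, related series; their ratio in the limit is exactly $\frac{\chi}{1-\chi^{-1/\kappa}}$, while the shared-companion blocks contribute the multiplicative factor $\frac{h}{h-1}$ (because the inefficient decoy slot is $1$ out of $h$ capacity slots, so offline outperforms $\Gamma_G$ by a factor of $h$ on the ``gap'' and by a factor of $1$ on the $(h-1)$ shared slots), together with the additive ``$+1$'' from the last round's bundle that offline also collects.

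Finally I would verify all the behavioural claims and carry out the arithmetic: that $\Gamma_G$'s virtual-value-density test never triggers a switch at any critical time point, that the prefix-vs-singleton rule inside Algorithm~1 always commits to the decoy-plus-companion bundle on every round, that the offline schedule is feasible with respect to deadlines, release times, and capacity, and then pass to $\epsilon\to 0$ and let the round count grow to conclude $W(OPT)/W(\Gamma_G) \to \frac{h}{h-1}\cdot \frac{\chi}{1-\chi^{-1/\kappa}}+1$.

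The main obstacle is the simultaneous tuning of three value scales (decoy, companion, interloper) and the round count so that (i) no interloper ever passes $\Gamma_G$'s density test, (ii) Algorithm~1's prefix-vs-singleton decision lands on the decoy-plus-companion bundle on every round, and (iii) the limiting ratio reproduces $\frac{h}{h-1}$ cleanly, rather than a weaker constant that dilutes the inefficient decoy slot across all $h$ slots. Extending the single-machine construction so that the infinite tail $\frac{\chi}{1-\chi^{-1/\kappa}}$ (rather than the finite one-round factor $\frac{\chi-1}{\chi^{1/\kappa}-1}$) emerges, and so that the capacity amplification factor is the full $\frac{h}{h-1}$ and not an absorbed version of it, is the delicate part of the construction.
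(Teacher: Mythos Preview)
Your decoy-plus-interloper template and the geometric round iteration are on the right track and match the paper's construction in spirit. However, your mechanism for obtaining the $\frac{h}{h-1}$ factor does not work. If the $(h-1)n_{\max}$ companion capacity is ``allocated identically by $\Gamma_G$ and offline,'' then those companion values appear equally in both numerator and denominator of $W(\mathrm{OPT})/W(\Gamma_G)$, which can only \emph{dilute} the ratio toward $1$, never amplify it above the single-slot ratio. Concretely, if $\Gamma_G$ collects companion value $V_c$ plus decoy value $V_d$ while OPT collects $V_c$ plus interloper value $V_I$, the ratio $(V_c+V_I)/(V_c+V_d)$ lies strictly between $1$ and $V_I/V_d$; no choice of $V_c$ makes it exceed $V_I/V_d$, let alone reach $\frac{h}{h-1}\cdot V_I/V_d$. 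Your explanatory sentence (``offline outperforms $\Gamma_G$ by a factor of $h$ on the gap and by a factor of $1$ on the $(h-1)$ shared slots'') does not produce $\frac{h}{h-1}$ by any arithmetic, and you yourself flag in the final paragraph that you do not yet see how to avoid the dilution.

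The paper's construction avoids companions altogether: \emph{all} $h$ slots run identical long-job preemption chains and identical short-job interlopers in parallel (each group contains $h$ size-$n_{\max}$ jobs), so that $\Gamma_G$ wastes work across the full capacity and OPT fills all $C=hn_{\max}$ instances with short jobs. The $\frac{h}{h-1}$ then comes from a separate trick: in the one long-job group that $\Gamma_G$ actually completes (group $J_{p-1}^l$, given an extended deadline so it is not preempted), exactly one of the $h$ jobs is shrunk to size $1$ instead of $n_{\max}$. Hence $\Gamma_G$'s completed welfare is proportional to $(h-1)n_{\max}+1$ while OPT's numerator scales with $hn_{\max}$, and the ratio picks up $\frac{hn_{\max}}{(h-1)n_{\max}+1}\to\frac{h}{h-1}$ as $n_{\max}\to\infty$. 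This size-$1$ job in the final surviving group is the key idea your proposal is missing.
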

\begin{proof}
We prove this lemma by an example. For the convenience of analysis, we assume $\kappa$ is an integer. Consider an example with $C=h\cdot n_{\max}$ and two types of jobs: long and
short. The length of long jobs is $\kappa$, while the length of short jobs is $1$. The jobs are released by groups. Let $p$ be a large integer, and we have $p+1$ groups of long jobs
and $p\kappa$ groups of short jobs, respectively.

The first group of long jobs (denoted as $J_0^l$) consists of $h$ long jobs with type $\theta_0^l=(0, \kappa, n_{\max}, \kappa, n_{\max})$.

The $(i+1)$-th group of long jobs (denoted as $J_i^l$) consists of $h$ long jobs with type $\theta_i^l=(i(\kappa-\epsilon), (i+1)\kappa, n_{\max}, \kappa, n_{\max}\cdot\chi^i)$, where
$p-2 \geq i \geq 1$.

The $p$-th group of long jobs (denoted as $J_{p-1}^l$) consists of $h-1$ long jobs with types $\theta_{p-1}^{l_1}=((p-1)(\kappa-\epsilon), (p+2)\kappa, n_{\max}, \kappa, n_{\max}\cdot
\chi^{p-1})$, and one long job with type $\theta_{p-1}^{l_2}=((p-1)(\kappa-\epsilon), (p+2)\kappa, 1, \kappa, \chi^{p-1})$.

The $(p+1)$-th group of long jobs (denoted as $J_p^l$) consists of $h$ long jobs with type $\theta_p^l=(p(\kappa-\epsilon), (p+1)\kappa, n_{\max}, \kappa,
n_{\max}\cdot\chi^{p-\epsilon}-\delta)$.

Here $\epsilon$ and $\delta$ are small constants satisfying $p\epsilon \ll 1$ and $\delta \ll \epsilon$.

In the meanwhile, we have $p\kappa$ groups of short jobs as follows.

The $(j+1)$-th group of short jobs (denoted as $J_s^j$) consists of $h$ short jobs with types $\theta_s^j=(j, j+1, n_{\max}, 1, n_{\max}\cdot(\chi^{j/\kappa}-\delta/\kappa))$. Here
$j=0,1,\ldots,p\kappa-1$.

It can be verified that only the jobs in group $J_{p-1}^l$ can be completed in the mechanism, with a
social welfare $\sim ((h-1)\cdot n_{\max}+1)\chi^{p-1}$. While in the optimal allocation, all the short jobs will be completed, and after that, group $J_p^l$ and group
$J_{p-1}^l$ will be completed successively, with a social welfare  $\sim h\cdot n_{\max}\sum_{j=0}^{p\kappa-1}\chi^{j/k}+h\cdot n_{\max}\cdot\chi^{p}+ ((h-1)\cdot n_{\max}+1)\chi^{p-1}
=h\cdot n_{\max}\cdot\frac{1-\chi^{p+1/\kappa}}{1-\chi^{1/\kappa}}+((h-1)\cdot n_{\max}+1)\chi^{p-1}$.
 Therefore, the competitive ratio of our mechanism is at
least $\frac{h\cdot n_{\max}}{(h-1)\cdot n_{\max}+1}\cdot \frac{1-\chi^{p+1/\kappa}}{(1-\chi^{1/\kappa})\chi^{p-1}}+1=\frac{h\cdot n_{\max}}{(h-1)\cdot n_{\max}+1}\cdot \frac{\chi-\chi^{-1/\kappa-p+1}}{1-\chi^{-1/\kappa}}+1$, which tends to $\frac{h}{h-1}\cdot
\frac{\chi}{1-\chi^{-1/\kappa}}+1$, when $p \rightarrow \infty$ and $n_{\max} \rightarrow \infty$.
\end{proof}

\begin{lemma}\label{lem2}
The competitive ratio of $\Gamma_{G}$ with an exponential priority function is at most $\frac{h}{h-1}\cdot \frac{\chi}{1-\chi^{-1/\kappa}}+1$.
\end{lemma}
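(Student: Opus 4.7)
The plan is to prove $W(\text{OPT}) \le \bigl(\tfrac{h}{h-1}\cdot\tfrac{\chi}{1-\chi^{-1/\kappa}}+1\bigr)W(\Gamma_G)$ by combining a pointwise knapsack-approximation bound with a geometric ``displacement chain'' argument. Split the jobs completed by OPT into those also completed by $\Gamma_G$ (charged one-for-one against $W(\Gamma_G)$, contributing the additive $+1$) and those missed by $\Gamma_G$, which I aim to bound by $\tfrac{h}{h-1}\cdot\tfrac{\chi}{1-\chi^{-1/\kappa}}\cdot W(\Gamma_G)$.

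First, at each critical time $t$ I would establish the snapshot bound $V'(A(t),t)\ge\tfrac{h-1}{h}\max_{S\text{ feasible}}V'(S,t)$, where $V'(S,t)=\sum_{i\in S}v_i\chi^{\delta_i^A(t)}$ evaluates the virtual value of a feasible set $S$ using $\Gamma_G$'s progress. With $k$ the breakpoint where the cumulative size in density order first exceeds $C$, the hypothesis $n_{\max}\le C/h$ yields $\sum_{i<k}n_i\ge C(h-1)/h$ and $n_k\le C/h$, hence $\sum_{i<k}v'_i\ge\rho'_k\cdot C(h-1)/h\ge(h-1)v'_k$. The $\max$-rule in Algorithm~1 then returns at least $\tfrac{h-1}{h}\bigl(\sum_{i<k}v'_i+v'_k\bigr)$, which dominates the LP relaxation of the knapsack and hence every feasible $S$, including OPT's running set and any singleton $\{j\}$.

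Second, for every job $j$ completed by OPT but not by $\Gamma_G$, I would inductively build a displacement chain $j=j_0,j_1,\ldots,j_m$ that ends at some $j_m$ actually completed by $\Gamma_G$. At the first critical time $\tau_i$ when $j_i$ leaves $\Gamma_G$'s running set, Step~1 applied to the singleton $\{j_i\}$ forces some preferred job $j_{i+1}$ to satisfy $v_{j_{i+1}}\chi^{\delta_{j_{i+1}}^A(\tau_i)}\ge v_{j_i}\chi^{\delta_{j_i}^A(\tau_i)}$. Because $j_{i+1}$ must accumulate at least $1/l_{j_{i+1}}\ge 1/\kappa$ worth of progress before it can itself be displaced, the real value along the chain grows by a factor of at least $\chi^{1/\kappa}$ at each link, so the chain must terminate. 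Telescoping gives $\sum_{i=0}^{m}v_{j_i}\le\tfrac{\chi}{1-\chi^{-1/\kappa}}v_{j_m}$, with the extra $\chi$ appearing because $j_m$'s virtual value at completion is $v_{j_m}\chi$ while its real value is $v_{j_m}$.

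Finally, summing the chain bound over all OPT-only jobs, charging each $\Gamma_G$-completed job at most once as a chain terminus, and multiplying by the Step~1 knapsack loss yields the desired bound $\sum_{j\in J^O\setminus A^*}v_j\le\tfrac{h}{h-1}\cdot\tfrac{\chi}{1-\chi^{-1/\kappa}}\cdot W(\Gamma_G)$. The main obstacle I expect is the bookkeeping in Step~2: several swaps can occur simultaneously at one critical time, a job can be preempted and restarted so its $\delta^A$ resets, and the ``at most once as terminus'' property requires a principled tiebreaking rule. I would handle this by collapsing zero-progress simultaneous swaps into a single chain step, only crediting the $\chi^{1/\kappa}$ factor when a job's progress genuinely advances by $\ge 1/\kappa$ between two displacement events, and (if needed) using a fractional assignment of $v_j$ to multiple candidate terminal jobs so that the total charge per $j_m\in A^*$ still fits inside $\tfrac{\chi}{1-\chi^{-1/\kappa}}v_{j_m}$.
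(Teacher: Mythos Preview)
Your Step~2 contains a gap that the proposed bookkeeping does not close. The claim that $j_{i+1}$ ``must accumulate at least $1/l_{j_{i+1}}\ge 1/\kappa$ worth of progress before it can itself be displaced'' is false: a new arrival $\epsilon$ time after $j_{i+1}$ starts can preempt it, so the chain of $\Gamma_G$-preemptions may have arbitrarily many links, each with arbitrarily small (but positive) progress, and you cannot extract a factor of $\chi^{1/\kappa}$ per link; collapsing only the zero-progress swaps does not help. The paper separates the two roles your chain is trying to play. The preemption chain in $\Gamma_G$ from time $t-\sigma$ to $t$ telescopes to a factor $\chi^{\sigma/\kappa}$ in \emph{elapsed time} $\sigma$, independent of the number of links; this is used only to bound each OPT job $i$ by $\chi^{-\sigma_i/\kappa}$ times a density term involving the eventual $\Gamma_G$-winner $j$. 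The geometric series $\sum\chi^{-\sigma/\kappa}=\chi/(1-\chi^{-1/\kappa})$ then comes not from chain links but from the \emph{OPT schedule}: the OPT jobs ultimately charged to a fixed $j$ are grouped by their OPT start time $t-\sigma$, consecutive $\sigma$'s differ by at least $l_{\min}=1$, and each group has total size at most $C$.

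There is a second, related gap: your chain selects a single successor $j_{i+1}$, but your snapshot inequality compares the singleton $\{j_i\}$ to the \emph{entire} running set, not to one job, so no individual $j_{i+1}$ need satisfy the stated inequality. With $C$ instances many OPT jobs must be charged to many $\Gamma_G$ jobs simultaneously, and the ``at most once as terminus'' device cannot be enforced by a single-successor chain. The paper handles this from the outset: it charges each OPT-only job $i$ \emph{fractionally} to all $\Gamma_G$-jobs running below a line at $\tfrac{h-1}{h}C$ instances at the moment OPT starts $i$, in proportion to their sizes $n_j$, and then propagates those fractional charges forward through every subsequent preemption. The factor $\tfrac{h}{h-1}$ is exactly the blow-up incurred by this size-proportional split (because at that moment $\Gamma_G$ occupies at least $\tfrac{h-1}{h}C$ instances), rather than a post-hoc multiplication by the knapsack slack.
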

\begin{proof}
Similar to~\cite{hajiaghayi2005online},
we will charge the values of winning jobs in an optimal allocation
(denoted as OPT) to winning jobs in our mechanism. Here a winning job in an allocation means the job is completed in the allocation. We assume, without loss
of generality, that OPT does not interrupt any job.

We draw a line $\ell$ which represents a capacity of $\frac{h-1}{h}C$ instances (Fig.~\ref{fig:3}).
 \begin{figure}[htpb]
\centerline{  \includegraphics[scale=0.25]{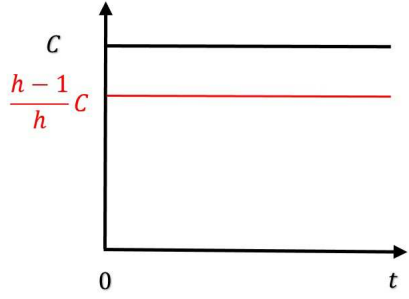}}
\caption{The line $\ell$ which represents the capacity of $\frac{h-1}{h}C$}
\label{fig:3}       
\end{figure}
For any winning agent $i$ in OPT, if he/she is also a winner in our mechanism,
then his/her value is charged to himself/herself.
Otherwise, consider the time $t$ at which $i$ is allocated the instances in OPT. At this time,
our mechanism allocates at least $\frac{h-1}{h}C$ instances to agents, since $n_{\max}\leq \frac{C}{h}$ and
$i$ is not allocated. We sort the jobs (denote $J_i$) that are allocated at
time $t$ by decreasing order of $\frac{v_i}{ {n}_i}\cdot \chi^{\delta_i}$. We let job $1$ (after sorting) get the bottom $n_1$ instances,
job $2$ is allocated above job $1$, and so forth. Using $J_{it}\subseteq J_i$ to denote the set of jobs
that are allocated under the line $\ell$ (if the line cuts some job $j\in J_i$, we only consider the part that under $\ell$
and use $j'$ to represent this part).
We first temporarily charge the value of $i$ to all the jobs in $J_{it}$, then each job $j\in J_{it}$ is temporarily charged $\frac{n_jh}{(h-1)C}v_i$ ($\leq \frac{n_ih}{(h-1)C}v'_j$).
A job $j\in J_{it}$ might be interrupted in our algorithm. If he/she is not interrupted, then he/she is finally charged $\frac{n_jh}{(h-1)C}v_i$.
If he/she is interrupted at time $t'$, then some jobs that were under $\ell$ before $t'$ may
be allocated above $\ell$ at $t'$. We use $J_{int}$ to denote all these jobs and jobs that are interrupted at $t'$.
We pass all the temporary charge of $j\in J_{int}$ to jobs that are newly allocated under line $\ell$ at $t'$, and
other jobs in $J_{it}\backslash J_{int}$ keep their temporary charge.

Note that after the interruption the total value
of jobs that under $\ell$ will not decrease, since $n_{\max}\leq \frac{C}{h}$ and jobs are sorted by decreasing of $\rho'$. Therefore,
after the interruption, each job $j$ under $\ell$ has a temporary charge of at most $\frac{n_jh}{(h-1)C}v_i\leq \frac{n_ih}{(h-1)C}v'_j$.
We continue this chain until all the temporary charge are finally charged.
We now calculate the maximum total value charged to agent $j$ with
value $v_j$ who wins at time $t$ in our mechanism.

If job $j$ is completed in $OPT$, there is a charge of $v_j$. Divide all jobs in $OPT$ whose value
is charged to $j$ to different groups according to their start time in $OPT$ by the following rule:

Consider a job $i$ in $OPT$ whose value is charged to $j$. Let $t' = t- \sigma_i$ be the time at which job $i$ receives an allocation in $OPT$, then we say $i$ is in group $\sigma_i$.
It is clear from the mechanism that $\sigma_i > -l_j$.

When $\sigma_i\leq 0$, it is easy to see that, the value of job $i$ is at most $\chi^{-\sigma_i/l_j}\frac{n_i v_j}{n_j}$. Thus, the total charge from group $\sigma_i$ is at most $\sum_{i}n_i\frac{v_j}{n_j}\frac{n_jh}{(h-1)C}\chi^{-\sigma_i/l_j}\leq \frac{h}{h-1}\chi^{-\sigma_i/l_j}v_j$.

When $\sigma_i>0$, we now claim that the value of $i$ is at most $\chi^{-\sigma_i/k}\frac{n_i v_j}{n_j}$, and the reason is as follows: When $\sigma_i>0$, job $i$ is released before $j$. There exists two scenarios which make job $i$ interrupted.
\begin{enumerate}
\item
Assume that job $i$ is interrupted by job $i_2$ after being allocated for $\sigma^1_i$ units of time, and then job $i_2$ is interrupted by job $i_3$ after $\sigma^2_i$ units of time, and so on. The last job in this chain is $i_\tau$ which is interrupted by job $j$ after $\sigma^\tau_i$ units of time. Then we know from our mechanism that
$\sigma^1_i+\sigma^2_i+\ldots+\sigma^\tau_i=\sigma_i$ and $\frac{v_i}{n_i} \chi^{\sigma^1_i /l_{i}}\leq \frac{v_{i_2}}{n_{i_2}}$,
$\frac{v_{i_2}}{n_{i_2}}\chi^{\sigma^2_i /l_{i_2}}\leq \frac{v_{i_3}}{n_{i_3}}, \ldots$,
and $\frac{v_{i_\tau}}{n_{i_\tau}} \chi^{\sigma^\tau_i/l_{i_\tau}}\leq \frac{v_j}{n_j}$, which combining with $l_{\max}=\kappa$ implies that
$v_i\leq \chi^{-\sigma_i/\kappa}\frac{n_i v_j}{n_j}$. Thus, the total charge from group $\sigma_i$ is at most $\sum_{i}n_i\frac{v_j}{n_j}\frac{n_jh}{(h-1)C}\chi^{-\sigma_i/\kappa}\leq \frac{h}{h-1}\chi^{-\sigma_i/\kappa}v_j$

\item
Assume that before job $i$ is released, job $i_2$ has been processed for $z$ units of time, then job $i_2$ is interrupted by job $i_3$ after $\sigma^2_i$ units of time, and so on.  The last job in this chain is $i_\tau$ which is interrupted by job $j$ after $\sigma^\tau_i$ units of time. Then we know from our mechanism that
$\sigma^2_i+\sigma^3_i+\ldots+\sigma^\tau_i=\sigma_i$ and $\frac{v_i}{n_i} \chi^{-z/l_{i_2}}\leq \frac{v_{i_2}}{n_{i_2}}$,
$\frac{v_{i_2}}{n_{i_2}}\chi^{(z+\sigma^2_i)/l_{i_2}}\leq \frac{v_{i_3}}{n_{i_3}}, \ldots$,
and $\frac{v_{i_\tau}}{n_{i_\tau}} \chi^{\sigma^\tau_i/l_{i_\tau}}\leq \frac{v_j}{n_j}$, which combining with $l_{\max}=\kappa$ implies that
$v_i\leq \chi^{-\sigma_i/\kappa}\frac{n_i v_j}{n_j}$. Thus, the total charge from group $\sigma_i$ is at most $\sum_{i}n_i\frac{v_j}{n_j}\frac{n_jh}{(h-1)C}\chi^{-\sigma_i/\kappa}\leq \frac{h}{h-1}\chi^{-\sigma_i/\kappa}v_j$
\end{enumerate}

Also, the value of $\sigma_i$ for any two such groups must be apart by at least $l_{min}=1$, so $\sigma_i=-l_j+i\cdot 1$, for $i=0,1,2,\ldots,+\infty$. Therefore, the total charge to $j$ is at most
\begin{align}
\nonumber
&v_j+\frac{h}{h-1}\cdot \sum_{i: \sigma_i\geq0}^\infty \chi^{-\frac{\sigma_i}{\kappa}} v_j + \frac{h}{h-1}\cdot \sum_{i: \sigma_i<0} \chi^{-\frac{\sigma_i}{l_j}} v_j\\
\nonumber
\leq& v_j+\frac{h}{h-1}\cdot \sum_{i: \sigma_i=0}^\infty \chi^{-\frac{\sigma_i}{\kappa}} v_j + \frac{h}{h-1}\cdot \sum_{i:\sigma_i=-\kappa}^{i:\sigma_i=-1} \chi^{-\frac{\sigma_i}{\kappa}} v_j\\
\nonumber
=&(\frac{h}{h-1}\cdot \frac{\chi}{1-\chi^{-\frac{1}{\kappa}}}+1)v_j
\end{align}

This shows that our algorithm is $(\frac{h}{h-1}\cdot \frac{\chi}{1-\chi^{-\frac{1}{\kappa}}}+1)$-competitive.
\end{proof}

Combining Lemma~\ref{lem1} and \ref{lem2} completes the proof of Theorem~\ref{main1}.

The following lemma shows that if $n_{\max}\rightarrow C$, $\Gamma_{G}$ does not have a constant competitive ratio.

\begin{lemma}\label{lem3}
The competitive ratio of $\Gamma_{G}$  with an exponential priority function can be arbitrarily bad when $n_{\max}=C$.
\end{lemma}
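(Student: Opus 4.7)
My plan is to prove Lemma~\ref{lem3} by exhibiting, for any constant $R$, an instance with $n_{\max}=C$ whose competitive ratio exceeds $R$. I would adapt the family of adversarial instances from the proof of Lemma~\ref{lem1}, specialized to the degenerate case $h=1$, and let $n_{\max}=C$ grow to make the ratio blow up. Concretely, I would reuse exactly the jobs of Lemma~\ref{lem1}: a chain $J_0^l,\ldots,J_{p-2}^l,J_p^l$ of size-$C$ long jobs released at $i(\kappa-\epsilon)$ with geometrically increasing values $n_{\max}\chi^i$, a penultimate group $J_{p-1}^l$ consisting of a single size-$1$ job of value $\chi^{p-1}$, and $p\kappa$ short jobs $J_s^j$ of size $C$ released at each integer time~$j$ with value $n_{\max}(\chi^{j/\kappa}-\delta/\kappa)$. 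The deadlines and the perturbation parameters $\epsilon,\delta$ are taken verbatim from Lemma~\ref{lem1}.

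The mechanism-side analysis would mirror Lemma~\ref{lem1}: the $\chi^\delta$-boost on the currently running long job always outbids the freshly released short, so every short fails; each $J_i^l$ with $i\le p-2$ is preempted by $J_{i+1}^l$ after running for only $\kappa-\epsilon$, and its tight deadline $(i+1)\kappa$ leaves insufficient time for a restart; finally the size-$1$ job of $J_{p-1}^l$ is the only job that actually completes, contributing $\chi^{p-1}$ to the mechanism's welfare. The offline optimum instead runs every short in its own unit-length slot, giving $n_{\max}(\chi^p-1)/(\chi^{1/\kappa}-1)$, and then packs $J_p^l$ on the window $[p\kappa,(p+1)\kappa]$ and the size-$1$ job of $J_{p-1}^l$ on $[(p+1)\kappa,(p+2)\kappa]$, for a total welfare of
\[
n_{\max}\frac{\chi^p-1}{\chi^{1/\kappa}-1}+n_{\max}\chi^{p-\epsilon}+\chi^{p-1}.
\]
Letting $p\to\infty$ in the resulting ratio yields $\frac{n_{\max}}{1}\cdot\frac{\chi}{1-\chi^{-1/\kappa}}+1$, matching the $h=1$ specialization of Lemma~\ref{lem1}'s formula; since $n_{\max}=C$ can be taken arbitrarily large while $\chi,\kappa$ remain fixed, the ratio is unbounded.

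The main obstacle is the mechanism-side verification: Lemma~\ref{lem1} only established ``only $J_{p-1}^l$ completes'' under $h\ge 2$, and a direct specialization to $h=1$ fails at two greedy tie-breaking comparisons---when $J_{p-1}^l$ (size $1$) is released against the running $J_{p-2}^l$ (size $n_{\max}$), and when $J_p^l$ arrives during $J_{p-1}^l$'s run---because the ``$\sum v'\ge v'_k$'' test is decided in favour of the large size-$n_{\max}$ job once $n_{\max}$ is big. I would address this by slightly thickening the penultimate group: replace the single size-$1$ bait by a block of size-$1$ copies whose \emph{virtue} values cumulatively outweigh every subsequent $J_p^l$ long in the greedy comparison, while their individual (actual) values are kept at $\chi^{p-1}/n_{\max}$ so that the total real welfare harvested by the mechanism remains $O(\chi^{p-1})$. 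Verifying this modified greedy comparison---and checking that the $\chi^\delta$-boost on the block still beats the newly released $J_p^l$ at the moment of its arrival---is the delicate step, and the rest of the calculation then closes exactly as above.
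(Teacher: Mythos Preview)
Your proposed fix fails at the very step you flag as delicate. You want a block of size-$1$ baits, each of value $\chi^{p-1}/n_{\max}$, to preempt the running size-$C$ job $J_{p-2}^l$. But at release each bait has virtue density $\chi^{p-1}/n_{\max}$, whereas $J_{p-2}^l$ has virtue density $\chi^{p-2}\cdot\chi^{(\kappa-\epsilon)/\kappa}\approx\chi^{p-1}$. So in $\Gamma_G$'s sort the big job comes first; since it alone fills $C$, we have $k=2$ and the rule compares $v'_{J_{p-2}^l}\approx n_{\max}\chi^{p-1}$ against a \emph{single} bait's $v'=\chi^{p-1}/n_{\max}$. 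The big job wins and your baits never start.

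This is structural, not a tuning issue. Any collection $B$ of size-$1$ baits that displaces $J_{p-2}^l$ must be sorted ahead of it (hence each bait needs density at least $\approx\chi^{p-1}$) and must satisfy $\sum_{b\in B}v'_b\ge v'_{J_{p-2}^l}\approx n_{\max}\chi^{p-1}$; since the baits are freshly released, their virtue values equal their actual values, so $B$'s total actual value is already $\ge n_{\max}\chi^{p-1}$. A routine case check then shows that no later size-$C$ job $J_p^l$ can preempt $B$ unless its own value exceeds $B$'s boosted cumulative virtue value, again $\Omega(n_{\max}\chi^{p-1})$. In every branch the mechanism ends up completing welfare $\Omega(n_{\max}\chi^{p-1})$, while OPT on your instance is $O(n_{\max}\chi^{p})$, so the ratio is bounded by a constant in $\chi,\kappa$ no matter how large you take $n_{\max}=C$. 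The Lemma~\ref{lem1} template with sizes $1$ and $C$ only cannot produce an unbounded ratio.

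The paper's argument uses an entirely different idea: it exploits that the greedy test compares $\sum_{i=1}^{k-1}v'_i$ to the \emph{single} value $v'_k$ rather than to the total value of the displaced set. With unit-length, zero-laxity jobs of three sizes $C/2$, $C/2+1$, $C$, a pair of size-$C/2$ jobs (combined value $V$) is displaced by one size-$(C/2{+}1)$ job of value $\approx V/2$ (it has marginally higher density, sits at position~$1$, and beats a single $C/2$-job at position~$k=2$); that job is then displaced by a size-$C$ job of value barely above $V/2$; and the next round restarts with a fresh $C/2$-pair of value $\approx V/4$ each. After $p$ rounds the only completed job has value $\approx 2^{p+1}(\tfrac12+\tfrac1C)^p$, whereas OPT simply runs the initial pair for welfare $2^{p+1}$; with $C>2$ fixed and $p\to\infty$ the ratio is unbounded. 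The missing ingredient in your plan is this value-\emph{halving} cascade enabled by the intermediate size $C/2+1$.
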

\begin{proof}
Consider the following example: Let $p$ be a large integer, $\epsilon$ and $\mu$ be small constants satisfying $p\epsilon\ll1$ and $\mu\ll\epsilon$.
There are $4p$ jobs. Each job has zero laxity and unit length, and their types are as follows. $\theta_1=\theta_2=(0, 1, \frac{C}{2}, 1, 2^p)$, $\theta_3=(\mu, \mu+1, \frac{C}{2}+1, 1, 2^p+\frac{2^{p+1}}{C}+\epsilon)$, and
$\theta_4=(2\mu, 1+2\mu, C, 1, 2^p+\frac{2^{p+1}}{C}+2\epsilon)$. (we choose the value density of job $3$ ``just'' larger than job $1$ and $2$'s, and job $3$ preempts job $1$ and $2$. Then we choose the value of job $4$ ``just'' larger than job $3$'s, and job $4$ preempts job $3$.)
Generally, for $i=4j+1$ ($p-1\geq j\geq 1$), $\theta_i=\theta_{i+1}=(3j\mu, 3j\mu+1, \frac{C}{2}, 1, \frac{v_{i-1}}{2}+\epsilon)$, $\theta_{i+2}=((3j+1)\mu, (3j+1)\mu+1, \frac{C}{2}+1, 1, \frac{v_{i-1}}{2}+\frac{v_{i-1}}{C}+2\epsilon)$, and $\theta_{i+3}=((3j+2)\mu, (3j+2)\mu+1, C, 1, \frac{v_{i-1}}{2}+\frac{v_{i-1}}{C}+3\epsilon)$. We can verify that in $\Gamma_G$ only
the last job (i.e., job $4p$) wins whose value is about $2^{p+1} (\frac{1}{2}+\frac{1}{C})^p$. While in OPT, the first two jobs win, and the social welfare is $2^{p+1}$. Therefore the competitive ratio
of $\Gamma_G$  with an exponential priority function is at least $(\frac{1}{2}+\frac{1}{C})^p$. As $C$ is usually a large number in practical cloud computing, we can assume $C>2$ and have $(\frac{1}{2}+\frac{1}{C})$ smaller than $1$. Therefore, the competitive ratio tends to infinity when $p$ tends to infinity.
\end{proof}

\begin{remark}
Fortunately, in the practice of cloud computing, the demand $n_i$ of an individual cloud customer is usually much smaller than the capacity of the cloud, and therefore the proposed mechanism is expected to perform well in real-world cloud computing market.
\end{remark}

Now we consider a special case of our model, where $C=n_{\max}=1$. We have the following theorem for this special case.
\begin{theorem}\label{main2}
Assume $C=n_{\max}=1$, $\Gamma_{G}$  obtains a tight competitive ratio $\frac{\chi}{1-\chi^{-1/\kappa}}+1$.
\end{theorem}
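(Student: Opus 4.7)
The plan is to mirror the two-step structure used for Theorem~\ref{main1}: establish the lower bound via an explicit adversarial instance, and establish the matching upper bound via a charging argument. The key observation is that with $C = n_{\max} = 1$ only one job can occupy the cloud at any instant, so the two-dimensional packing considerations that forced the $\frac{h}{h-1}$ slack in Lemma~\ref{lem2} collapse entirely, and the charging ratio becomes exactly $1$. This is precisely why the general bound specializes from $\frac{h}{h-1}\cdot\frac{\chi}{1-\chi^{-1/\kappa}}+1$ to $\frac{\chi}{1-\chi^{-1/\kappa}}+1$.

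For the lower bound, I would adapt the construction in Lemma~\ref{lem1} by dropping the multiplicity $h$ and setting $n_i = 1$ for every job. Concretely, release a sequence of $p$ groups, alternating in the adversary's favor: an initial long job of length $\kappa$ and value $v$, then $\kappa$ short jobs (length $1$) with value densities that just barely exceed the current virtual value density $\frac{v}{\kappa}\chi^{\delta}$ of the long job, forcing $\Gamma_G$ to preempt it; then a new long job whose value is just above the virtual value of the last short job, preempting that chain; and so on. At the end, $\Gamma_G$ completes only the final long job with value $\sim \chi^{p-1}$, whereas OPT can sequentially complete every short job plus one long job, yielding social welfare $\sim \chi^{p-1}\cdot\frac{\chi}{1-\chi^{-1/\kappa}} + \chi^{p-1}$. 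Taking $p \to \infty$ gives the claimed ratio. The main care here is to tune the $\epsilon$ and $\mu$ offsets in the release times so that each group is simultaneously (i) feasible for OPT to complete consecutively and (ii) strictly more attractive than the current job running in $\Gamma_G$.

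For the upper bound, I would replay the charging scheme of Lemma~\ref{lem2} but without the horizontal line $\ell$: at any time $t$ at which OPT starts a job $i$ that $\Gamma_G$ does not complete, there is exactly one job $j$ running in $\Gamma_G$ (since $n_i = n_j = C = 1$), and I charge $v_i$ entirely to that $j$. If $j$ is later preempted, I pass the charge forward along the preemption chain exactly as before. The virtual-value inequality $\frac{v_i}{n_i}\chi^{\delta_i}\leq\frac{v_{j}}{n_j}$ along each preemption link, combined with $l_i\leq\kappa$ and $l_j\leq\kappa$, yields $v_i\leq \chi^{-\sigma_i/\kappa}v_j$ when $\sigma_i > 0$ and $v_i\leq\chi^{-\sigma_i/l_j}v_j$ when $\sigma_i\leq 0$, where $\sigma_i = t - t_i$ is the time offset between $i$'s OPT start time and $j$'s completion in $\Gamma_G$. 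Summing over the at-most-one charge per unit time offset gives a total charge of at most
\[
v_j + \sum_{k=0}^{\infty}\chi^{-k/\kappa} v_j + \sum_{k=1}^{\kappa}\chi^{k/\kappa} v_j \;\leq\; \Bigl(1 + \frac{\chi}{1-\chi^{-1/\kappa}}\Bigr)\,v_j,
\]
which matches the claim after summing over winning jobs $j$ in $\Gamma_G$.

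The main obstacle I anticipate is the lower-bound construction: unlike the $h\geq 2$ case, we cannot use bundle sizes to manufacture a ``residual'' job that slips through, so the example must be driven purely by the growth of the exponential priority $\chi^{\delta}$ over time. Care is needed to ensure that the chain of preemptions is induced in $\Gamma_G$ while simultaneously keeping OPT's schedule feasible within every job's deadline, which requires setting the perturbation constants $\epsilon,\mu$ sufficiently small and then letting $p\to\infty$ after the competitive ratio is written down. The upper-bound argument, by contrast, is a direct specialization and should go through without complications.
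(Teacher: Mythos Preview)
Your two-step plan matches the paper's proof exactly, and your upper-bound charging argument is correct and is precisely the single-instance specialization of Lemma~\ref{lem2} that the paper carries out.

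The lower-bound construction, however, has the preemption roles inverted. You propose short jobs whose densities \emph{just barely exceed} the running long job's virtual density $\tfrac{v}{\kappa}\chi^{\delta}$, so that $\Gamma_G$ abandons the long job for them. But a short job has length~$1$: once $\Gamma_G$ switches to it and no higher-priority job arrives within that unit, the short job \emph{completes}, and $\Gamma_G$ would collect all the short-job values---destroying the gap you need. The correct adaptation of Lemma~\ref{lem1} (and what the paper does) sets each short job's value to be \emph{just below} the current long job's virtual density, so that $\Gamma_G$ never runs a short job at all. The preemption chain inside $\Gamma_G$ is driven entirely by the long jobs: the $(i{+}1)$-th long job, released at time $i(\kappa-\epsilon)$ with value $\chi^i$, narrowly beats the $i$-th long job's virtual value $\chi^{i-1}\cdot\chi^{(\kappa-\epsilon)/\kappa}$. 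Thus $\Gamma_G$ completes only $J_{p-1}^l$ with value $\sim\chi^{p-1}$, while OPT ignores the long chain, runs all $p\kappa$ short jobs back-to-back for total value $\sum_{j=0}^{p\kappa-1}\chi^{j/\kappa}$, and then picks up the last two long jobs (whose deadlines are set late enough to permit this). With this fix your outline goes through and recovers the paper's tight instance.
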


\begin{proof}
We prove the theorem by charging the value of any completed job in an optimal allocation (denoted as $OPT$) to a completed job in our mechanism. For any completed job $i$ in $OPT$, if it is also completed in our mechanism, then its value is charged to itself. Otherwise, consider the time $t$ at which $i$ is completed in $OPT$. At this time, our mechanism has been processing another job $j_0$. This job might be preempted in our mechanism. If it is preempted, let $j_1$ be the job that preempts it. We continue this chain until we reach a job $j_k$ which is not preempted, and charge the value of job $i$ to this job.

We now calculate the maximum total value charged to a job $j$ with value $v_j$, which is released at time $t$ and will be completed in our mechanism. If job $j$ is completed in $OPT$, there is a charge of $v_j$. Consider a job $i$ in $OPT$ whose value is charged to $j$. Let $t' = t- \sigma_i$ be the time at which $i$ is processed in $OPT$.  Similar to proof of Theorem~\ref{main1}, we easily know $\sigma_i > -l_j$, and what's more, when $\sigma_i\leq0$, the value of $i$ is at most $\chi^{-\sigma_i/l_j}v_j$, and when $\sigma_i>0$, the value of $i$ is at most $\chi^{-\sigma_i/k}v_j$,

Also, the value of $\sigma_i$ for any two such $i$'s must be apart by at least $l_{min}=1$, so $\sigma_i=-l_j+i\cdot 1$, for $i=0,1,2,\ldots,+\infty$. Therefore, the total charge to $j$ is at most
\begin{align}
\nonumber
&v_j+\sum_{i: \sigma_i\geq0}^\infty \chi^{-\frac{\sigma_i}{\kappa}} v_j + \frac{h}{h-1}\cdot \sum_{i: \sigma_i<0} \chi^{-\frac{\sigma_i}{l_j}} v_j\\
\nonumber
\leq& v_j+\sum_{i: \sigma_i=0}^\infty \chi^{-\frac{\sigma_i}{\kappa}} v_j + \frac{h}{h-1}\cdot \sum_{i:\sigma_i=-\kappa}^{i:\sigma_i=-1} \chi^{-\frac{\sigma_i}{\kappa}} v_j\\
\nonumber
=&(\frac{\chi}{1-\chi^{-\frac{1}{\kappa}}}+1)v_j
\end{align}
This shows that our mechanism is ($\frac{\chi}{1-\chi^{-1/\kappa}}+1$)-competitive.

We give an example below to show that the above analysis is tight. For the convenience of analysis, we assume $\kappa$ is an integer and $l_{\min}=1$. In our example, there are two types of jobs: long and short. The length of long jobs is $\kappa$, while the length of short jobs is $1$. Let $p$ be a large integer, the number of long and short jobs are $p+1$ and $p\kappa$, respectively. The fist long job $J_0^l$ is released at time $0$, and its type is
$\theta_0^l=(0, \kappa, 1, \kappa, 1)$. For $p-2 \geq i \geq 1$, job $J_i^l$ has type $\theta_i^l=(i(\kappa-\epsilon), (i+1)\kappa, 1, \kappa, \chi^i)$. Job $J_p^l$ has type $\theta_p^l=(p(\kappa-\epsilon), (p+1)\kappa, 1, \kappa, \chi^{p-\epsilon}-\delta)$. Here, $\epsilon$ and $\delta$ are small constants satisfying $p\epsilon \ll 1$ and $\delta \ll \epsilon$. We also have long job $J_{p-1}^l$, whose type is $\theta_{p-1}^l=((p-1)(\kappa-\epsilon), (p+2)\kappa, 1, \kappa, \chi^{p-1}+\delta)$. In the meanwhile, we have short jobs as follows. For $j=0,1,\ldots,p\kappa-1$, we denote $J_j^s$ as the $(j+1)$th short job, whose type is $\theta_j^s=(j, j+1, 1, 1, \chi^{j/\kappa}-\delta/\kappa)$. for $j=0,1,\ldots,p\kappa-1$.

It can be verified that only one job $J_{p-1}^l$ can be completed in our mechanism, with a social welfare $\sim \chi^{p-1}$. While in optimal solution, all the short jobs will be completed, and after that, $J_p^l$ and $J_{p+1}^{l}$ will be completed successively, with a social welfare
$\sim \frac{1-\chi^{-1}}{1-\chi^{-1/\kappa}}(1+\chi+,\ldots,+\chi^{p-1}+\chi^p)+\chi^{p-1}$. Therefore, the competitive ratio of our mechanism is at least $\frac{1-\chi^{-1}}{1-\chi^{-1/\kappa}}(\chi^{-(p-1)}+\chi^{-(p-2)},\ldots,+1+\chi)+1$, which tends to $\frac{\chi}{1-\chi^{-1/\kappa}}+1$, when $p \rightarrow \infty$.
\end{proof}

In many situations, we know the maximum length $\kappa$ of any job in advance (e.g., specified by the cloud provider), then our mechanism can choose a best $\chi$ to obtain the best competitive ratio. we have the following proposition.

\begin{proposition}\label{exp}
If $\kappa$ is known, when $\chi=(\frac{\kappa+1}{\kappa})^\kappa$, the mechanism $\Gamma_{G}$  obtains the best competitive ratio. When $C\geq h\cdot n_{\max}$, where $h\geq 2$ is an integer, this competitive ratio is $\frac{h}{h-1}\cdot (\kappa+1)(1+\frac{1}{\kappa})^{\kappa}+1$. When $C=n_{\max}=1$, this competitive ratio is $(\kappa+1)(1+\frac{1}{\kappa})^{\kappa}+1$.
\end{proposition}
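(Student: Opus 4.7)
The plan is to treat this as a straightforward one-variable calculus optimization on top of Theorems~\ref{main1} and~\ref{main2}. The competitive ratio in both cases has the form $A \cdot g(\chi) + 1$ where $A$ is a constant ($\frac{h}{h-1}$ or $1$) that does not depend on $\chi$, and $g(\chi) = \frac{\chi}{1 - \chi^{-1/\kappa}}$ for $\chi > 1$. So it suffices to minimize $g(\chi)$ alone.

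The key substitution is $y = \chi^{1/\kappa}$, which is a monotone bijection from $(1, \infty)$ to itself. Under this substitution $g$ becomes $\tilde g(y) = \frac{y^{\kappa+1}}{y-1}$. I would then compute $\tilde g'(y) = \frac{y^\kappa\bigl[\kappa y - (\kappa+1)\bigr]}{(y-1)^2}$, note that this vanishes exactly at $y^\ast = \frac{\kappa+1}{\kappa}$, is negative for $y < y^\ast$, and positive for $y > y^\ast$, so $y^\ast$ is the unique minimizer on $(1, \infty)$. Translating back, the optimal choice is $\chi^\ast = (y^\ast)^\kappa = \bigl(\frac{\kappa+1}{\kappa}\bigr)^\kappa$, which is exactly the value claimed.

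To evaluate the minimum, use $y^\ast - 1 = \frac{1}{\kappa}$ and rewrite
\begin{equation*}
\tilde g(y^\ast) = \frac{(y^\ast)^{\kappa+1}}{y^\ast - 1} = \kappa \cdot \left(\frac{\kappa+1}{\kappa}\right)^{\kappa+1} = (\kappa+1)\left(1+\frac{1}{\kappa}\right)^{\kappa}.
\end{equation*}
Plugging this into the expression from Theorem~\ref{main1} yields the first bound $\frac{h}{h-1}\cdot(\kappa+1)(1+\frac{1}{\kappa})^\kappa + 1$, and plugging it into the expression from Theorem~\ref{main2} yields $(\kappa+1)(1+\frac{1}{\kappa})^\kappa + 1$ for the $C=n_{\max}=1$ case.

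There is no real obstacle here; the only thing to be careful about is that Theorems~\ref{main1} and~\ref{main2} give the competitive ratio as a function of $\chi$, so minimizing the bound over $\chi$ automatically produces the best guaranteed competitive ratio achievable from those theorems. The monotonicity check of $\tilde g'$ on either side of $y^\ast$ (rather than just setting the derivative to zero) is worth including to ensure $\chi^\ast$ is indeed a global minimum on $(1,\infty)$, since $\tilde g(y) \to \infty$ both as $y \to 1^+$ and as $y \to \infty$.
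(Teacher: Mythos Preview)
Your proposal is correct. The paper states Proposition~\ref{exp} without proof, treating it as an immediate consequence of Theorems~\ref{main1} and~\ref{main2}; your calculus optimization via the substitution $y=\chi^{1/\kappa}$ is exactly the natural argument that fills this gap, and your derivative computation, sign analysis, and evaluation at the minimizer are all accurate.
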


\subsection{General Non-decreasing Priority Functions}
We consider any general non-decreasing priority function, and show that the competitive ratio of $\Gamma_G$ is greater than $(\sqrt{\kappa}+1)^2+1$.
We present an example here.
\begin{example}\label{ex_g}
There are two types of jobs: long and short. The length of long jobs is $\kappa$, while the length of short jobs is $1$. Let $p$ be a large integer, the number of long and short jobs are $p+1$ and $p\kappa$, respectively. The fist long job $J_0^l$ is released at time $0$, and its type is
$\theta_0^l=(0, \kappa, 1, \kappa, 1)$. For $p-2 \geq i \geq 1$, job $J_i^l$ has type $\theta_i^l=(i(\kappa-\epsilon), (i+1)\kappa, 1, \kappa, f^i(1))$. Job $J_p^l$ has type $\theta_p^l=(p(\kappa-\epsilon), (p+1)\kappa, 1, \kappa, f^{p-1}(1)\cdot f(1-\epsilon)-\delta)$. Here, $\epsilon$ and $\delta$ are small constants satisfying $p\epsilon \ll 1$ and $\delta \ll \epsilon$. We also have long job $J_{p-1}^l$, whose type is $\theta_{p-1}^l=((p-1)(\kappa-\epsilon), (p+2)\kappa, 1, \kappa, f^{p-1}(1)+\delta)$. In the meanwhile, we have short jobs as follows. For $j=0,1,\ldots,p\kappa-1$, we denote $J_j^s$ as the $(j+1)$th short job, whose type is $\theta_j^s=(j, j+1, 1, 1, f^{\lfloor j/\kappa \rfloor}(1)\cdot f(j/\kappa - \lfloor j/\kappa \rfloor)-\delta/\kappa)$. for $j=0,1,\ldots,p\kappa-1$.
\end{example}

It can be verified that only one job $J_{p-1}^l$ can be completed in our mechanism, with a social welfare $\sim f^{p-1}(1)$. While in an optimal allocation, all the short jobs will be completed, and after that, $J_p^l$ and $J_{p-1}^{l}$ will be completed successively, with a social welfare
$\sim (f(0)+f(\frac{1}{\kappa})+,\ldots,+f(\frac{\kappa-1}{\kappa}))\cdot(1+\ldots+f^{p-1}(1))+f^p(1)+f^{p-1}(1) \geq \kappa\cdot (1+f(1)+,\ldots,+f^{p-1}(1))+f^p(1)+f^{p-1}(1)$. Therefore, the competitive ratio of our mechanism is at least $\kappa\cdot (f^{-(p-1)}(1)+\ldots+1)+f(1)+1$, which tends to $\frac{\kappa}{1-f^{-1}(1)}+f(1)+1$, when $p \rightarrow \infty$.
We use $\alpha$ to denote this competitive ratio, i.e. $\alpha=\frac{\kappa}{1-f^{-1}(1)}+f(1)+1$. Regarding $\alpha$ as a function of $f(1)$, we have $\alpha= \frac{\kappa\cdot f(1)}{f(1)-1}+f(1)+1=\kappa+\frac{\kappa}{f(1)-1}+(f(1)-1)+2$. Because $f(1)\geq 1$, it is clear that $\alpha\geq \kappa+2\sqrt{\kappa}+2 = (\sqrt{\kappa}+1)^2+1$, and equality holds if and only if $f(1)=\sqrt{\kappa}+1$. Therefore, the following example holds.

\begin{theorem}\label{general}
When $C=1$ and $f$ is a non-decreasing function, the competitive ratio of $\Gamma_G$ is no less than $(\sqrt{\kappa}+1)^2+1$.
\end{theorem}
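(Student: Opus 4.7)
The plan is to prove the lower bound via the explicit adversarial instance already laid out in Example~\ref{ex_g}, and then optimize the resulting expression over the choice of $f(1)$. First I would carefully verify the behavior of $\Gamma_G$ on that instance: since $C=1$, at most one job can be processed at any time, so the mechanism just needs to be analyzed as a sequence of preemptions. The long jobs $J_0^l,\ldots,J_{p-2}^l$ are released with a small overlap $\epsilon$ and values $f^i(1)$; because $f$ is non-decreasing and $f(0)=1$, a newly arriving long job $J_{i+1}^l$ has virtual value density $f^{i+1}(1)$, which strictly exceeds the virtual value density of the currently running $J_i^l$ (whose completeness rate is still less than $1$ at the moment of preemption), and hence preempts it. The short-job chain $J_j^s$ is constructed so that its virtual value density at each release point is just below the running long job's, so short jobs never get allocated. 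Finally, $J_p^l$ preempts $J_{p-1}^l$, but $J_{p-1}^l$'s large deadline lets it run after $J_p^l$ terminates the chain, so $J_{p-1}^l$ is the unique winner with social welfare $\sim f^{p-1}(1)$.

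Next I would compute $OPT$ on the same instance. Because long jobs have zero laxity (once preempted, they can't finish except in the case of $J_{p-1}^l$'s extended deadline), the offline optimum instead runs all $p\kappa$ short jobs back-to-back during $[0,p\kappa]$, obtaining roughly $\sum_{j=0}^{p\kappa-1} f^{\lfloor j/\kappa\rfloor}(1)\cdot f(\{j/\kappa\})$, and then completes $J_p^l$ and $J_{p-1}^l$ in succession. Summing these contributions gives at least $\kappa\bigl(1+f(1)+\cdots+f^{p-1}(1)\bigr)+f^p(1)+f^{p-1}(1)$, using the crude bound $f(0)+f(1/\kappa)+\cdots+f((\kappa-1)/\kappa) \geq \kappa\cdot f(0)=\kappa$.

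Dividing $OPT$ by the mechanism's welfare $\sim f^{p-1}(1)$ and letting $p\to\infty$ yields the competitive-ratio lower bound
\begin{equation*}
\alpha \;=\; \frac{\kappa}{1-f^{-1}(1)} + f(1) + 1 \;=\; \kappa + \frac{\kappa}{f(1)-1} + (f(1)-1) + 2,
\end{equation*}
where the last identity is obtained by rewriting $\frac{\kappa\cdot f(1)}{f(1)-1}$ as $\kappa + \frac{\kappa}{f(1)-1}$. The final step is to minimize $\alpha$ over the free parameter $f(1)>1$ (any $f(1)\leq 1$ would make $f^{-1}(1)\geq 1$ and the bound vacuous or stronger). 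By AM--GM applied to $\frac{\kappa}{f(1)-1}$ and $(f(1)-1)$, the sum is at least $2\sqrt{\kappa}$, giving $\alpha \geq \kappa + 2\sqrt{\kappa} + 2 = (\sqrt{\kappa}+1)^2 + 1$, with equality at $f(1)=\sqrt{\kappa}+1$, as claimed.

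The main obstacle is the first step: carefully checking that the small perturbations $\epsilon,\delta$ are chosen so that at every critical time point, the virtual-value comparisons in the greedy rule go the intended way, so that the sequence of preemptions proceeds exactly as described and no short job ever wins. Once the behavior of the mechanism on Example~\ref{ex_g} is nailed down, the arithmetic for $OPT$ and the AM--GM minimization are routine.
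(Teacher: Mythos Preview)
Your approach is essentially identical to the paper's: both use Example~\ref{ex_g}, compute the mechanism's welfare versus $OPT$, send $p\to\infty$ to get $\alpha=\frac{\kappa}{1-f^{-1}(1)}+f(1)+1$, rewrite it as $\kappa+\frac{\kappa}{f(1)-1}+(f(1)-1)+2$, and minimize over $f(1)>1$ via AM--GM.

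One small correction in your narrative of the mechanism's run: the value of $J_p^l$ is set to $f^{p-1}(1)\cdot f(1-\epsilon)-\delta$, which is strictly \emph{below} the virtual value of $J_{p-1}^l$ at the arrival time of $J_p^l$ (that virtual value is $(f^{p-1}(1)+\delta)\cdot f(1-\epsilon/\kappa)$, and $f(1-\epsilon/\kappa)\ge f(1-\epsilon)$ since $f$ is non-decreasing). So $J_p^l$ does \emph{not} preempt $J_{p-1}^l$; rather, $J_{p-1}^l$ simply runs through, which is how $J_{p-1}^l$ ends up as a winner in the mechanism. This does not change the social-welfare calculation or the final bound, but you should fix the preemption direction when you write out the verification you flagged as the main obstacle.
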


Actually, when $C\geq h\cdot n_{\max}$, where $h\geq 2$ is an integer, the competitive ratio of $\Gamma_G$ with a non-decreasing priority function $f$ have a lower bound $\frac{h}{h-1}\cdot (\sqrt{\kappa}+1)^2+1$.

\begin{theorem}\label{general}
Assume $C\geq h\cdot n_{\max}$, where $h\geq 2$ is an integer, with any general non-decreasing priority function $f$, the competitive ratio of $\Gamma_G$ is no less than $\frac{h}{h-1}\cdot(\sqrt{\kappa}+1)^2+1$.
\end{theorem}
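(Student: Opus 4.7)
The plan is to generalize the construction from the $C=1$ case (Example~\ref{ex_g} and the preceding Theorem~\ref{general}) by ``fattening'' each group of jobs so that it fully occupies the capacity $C=h\cdot n_{\max}$, while preserving the critical device of a single small job with extended deadline that makes one group stand out. Specifically, I would replicate the group structure of Example~\ref{ex_g} as follows: in place of each long/short job in that example, put $h$ copies of the same job, each demanding $n_{\max}$ instances, so that the $h$ copies together fill the capacity. The sole exception is the ``winning'' group $J_{p-1}^l$, which contains $h-1$ long jobs of demand $n_{\max}$ and one extra long job of demand $1$, all with value $\sim f^{p-1}(1)$ and with an extended deadline $(p+2)\kappa$. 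The release times and values follow the same $\epsilon$-staircase pattern as in Example~\ref{ex_g} and as in Lemma~\ref{lem1}, so that at every critical time point the priority scores $\frac{v_i}{n_i}f(\delta_i)$ favor the arriving group over the currently-running one.

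Next I would verify two welfare computations. In $\Gamma_G$, each earlier group is preempted at the right instant by the subsequent group; only the extended-deadline group $J_{p-1}^l$ survives to completion, yielding social welfare $\sim ((h-1)n_{\max}+1)f^{p-1}(1)$. In OPT, all short jobs fit one after the other (they occupy the full capacity $C$ for non-overlapping unit intervals), then group $J_p^l$ and finally $J_{p-1}^l$ both complete, giving welfare $\sim h n_{\max}\kappa \sum_{i=0}^{p-1} f^i(1) + h n_{\max} f^p(1) + ((h-1)n_{\max}+1)f^{p-1}(1)$, exactly mirroring the optimal schedule in Lemma~\ref{lem1}. Taking the ratio and letting $p,n_{\max}\to\infty$ drives the additive $1$ in the denominator to zero and yields
\[
\frac{W(\mathrm{OPT})}{W(\Gamma_G)} \;\geq\; \frac{h}{h-1}\cdot\Bigl(\frac{\kappa}{1-f^{-1}(1)}+f(1)\Bigr)+1.
\]
Finally, minimizing the bracketed quantity over $f(1)\ge 1$ (the same calculus as in the $C=1$ case: write it as $\kappa+\tfrac{\kappa}{f(1)-1}+(f(1)-1)+1$ and apply AM-GM) gives $(\sqrt{\kappa}+1)^2$ at $f(1)=\sqrt{\kappa}+1$, producing the claimed bound $\frac{h}{h-1}(\sqrt{\kappa}+1)^2+1$.

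The main obstacle is bookkeeping rather than any new idea. The delicate step is choosing the constants $\epsilon,\delta$, the tie-breaking of priorities, and the exact release times so that at every critical time point: (i) each intermediate group is strictly preempted by the next group, so none of them completes in $\Gamma_G$; (ii) the extended-deadline group $J_{p-1}^l$ survives its preemption chain and still finishes before $(p+2)\kappa$; and (iii) the presence of the single demand-$1$ job in $J_{p-1}^l$ (the object that contributes the ``$+1$'' in $((h-1)n_{\max}+1)$) does not disturb the preemption chain nor crowd out the $h-1$ large jobs in that group. Because the priority function $f$ is only assumed non-decreasing, I cannot rely on multiplicative identities as in the exponential case; the staircase values must instead be defined directly in terms of iterates of $f$, exactly as in Example~\ref{ex_g}, and the verification proceeds value-by-value using only monotonicity of $f$.
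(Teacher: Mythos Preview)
Your proposal is correct and follows precisely the approach the paper intends: the paper states this theorem without an explicit proof, but the implied argument is exactly the combination you describe---replicate the $C=1$ construction of Example~\ref{ex_g} by replacing each job with $h$ copies of demand $n_{\max}$ (mirroring the ``fattening'' in Lemma~\ref{lem1}), except for the surviving group $J_{p-1}^l$, which carries $h-1$ large jobs and a single demand-$1$ job so that $\Gamma_G$'s welfare is $((h-1)n_{\max}+1)f^{p-1}(1)$; then send $p,n_{\max}\to\infty$ and minimize over $f(1)$ via AM--GM as in the $C=1$ analysis. Your identification of the only delicate point---choosing the perturbations so that every preemption is strict for a merely non-decreasing $f$---is also accurate and matches the care already taken in Example~\ref{ex_g} and Lemma~\ref{lem1}.
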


\subsection{Linear Priority Functions}
Since the linear priority function has been widely used in online scheduling problems \cite{porter2004mechanism}, in this subsection, we study how the proposed mechanism performs with a linear priority function:
$$f(\delta)=1+a\delta,$$
where $a$ is a non-negative parameter.

\begin{theorem}\label{linear}
When $C=1$, the competitive ratio of $\Gamma_G$ with a linear priority function is lower bounded by $(\sqrt{2\kappa(\kappa+1)}+\frac{3}{2}(\kappa+1))$.
\end{theorem}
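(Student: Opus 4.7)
The plan is to mimic the lower-bound construction used for the exponential case (Lemma~\ref{lem1} and the example in Theorem~\ref{main2}), rebuilt around the linear priority function $f(\delta)=1+a\delta$, and then to optimize over the free parameter $a\ge 0$. First I would set up a chain of $p+1$ long jobs of length $\kappa$, indexed by $i=0,\dots,p$, with release time $i(\kappa-\epsilon)$, deadline $(i+1)\kappa$, and value roughly $v_i^l=(1+a)^i$. The key property of $f$ that I want to exploit is that a long job that has been continuously processed for $\kappa-\epsilon$ units of time has virtual value $v_i^l\cdot(1+a(1-\epsilon/\kappa))\to v_i^l(1+a)$, so choosing $v_{i+1}^l$ just above $v_i^l(1+a)$ forces $\Gamma_G$ to preempt long job $i$ in favor of long job $i+1$ right before $i$'s completion. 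I will also add an extra ``rescue'' long job $J_{p-1}^l$ with a later deadline (exactly as in Theorem~\ref{main2}) so that only that single long job is actually completed by $\Gamma_G$.

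Into each unit interval $[i(\kappa-\epsilon)+t,\,i(\kappa-\epsilon)+t+1]$ with $t=0,1,\dots,\kappa-1$ I would insert one short job of length $1$ whose value is exactly $v_i^l(1+at/\kappa)-\delta/\kappa$; by construction the short job's virtual value density is just below that of the long job currently occupying the machine, so it is never scheduled by $\Gamma_G$, yet OPT can complete all these short jobs together with both $J_p^l$ and $J_{p-1}^l$. Summing the short-job values inside one epoch gives $v_i^l(\kappa+\tfrac{a(\kappa-1)}{2})$, and summing over $i=0,\dots,p-1$ with $v_i^l=(1+a)^i$ gives a geometric series. Dividing the resulting OPT welfare by the $\Gamma_G$ welfare $\approx(1+a)^{p-1}$ and letting $p\to\infty$ leaves the ratio
\[
g(a)\;=\;\frac{(\kappa+\tfrac{a(\kappa-1)}{2})(1+a)}{a}+(1+a)+1\;=\;\frac{\kappa}{a}+\frac{(\kappa+1)a}{2}+\kappa+\frac{\kappa+3}{2}.
\]

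Finally I would minimize $g(a)$ over $a>0$ by AM--GM on the two terms $\kappa/a$ and $(\kappa+1)a/2$, whose product is the constant $\kappa(\kappa+1)/2$; the minimum $\sqrt{2\kappa(\kappa+1)}$ is attained at $a=\sqrt{2\kappa/(\kappa+1)}$. Adding the constant piece $\kappa+(\kappa+3)/2=\tfrac{3(\kappa+1)}{2}$ gives exactly the claimed lower bound $\sqrt{2\kappa(\kappa+1)}+\tfrac{3}{2}(\kappa+1)$, and since this holds for every admissible choice of $a$, the competitive ratio of $\Gamma_G$ with a linear priority function is at least this value.

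The main obstacle is the bookkeeping required to certify that the bad instance actually behaves as advertised: I need to verify that at every critical time point the virtual values line up so that (i) each long job $i+1$ preempts long job $i$ as soon as it is released, (ii) no short job is ever preferred over the currently running long job, and (iii) once $J_{p-1}^l$ is the only survivor with enough laxity, it is the unique job $\Gamma_G$ completes. These checks are purely arithmetic but must be done simultaneously for long-vs-long, long-vs-short, and short-vs-short comparisons, and the perturbations $\epsilon,\delta$ must be chosen small enough (e.g.\ $p\epsilon\ll 1$ and $\delta\ll\epsilon$) to preserve all strict inequalities in the limit $p\to\infty$, $n_{\max}\to C=1$.
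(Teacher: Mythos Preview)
Your proposal is correct and follows essentially the same route as the paper: you reuse the paper's generic bad instance (the chain of long jobs with values $(1+a)^i$ plus unit-length short jobs of value $(1+a)^i(1+at/\kappa)$ inside each epoch), compute the same limiting ratio $g(a)=\frac{\kappa}{a}+\frac{(\kappa+1)a}{2}+\frac{3(\kappa+1)}{2}$, and minimize over $a>0$ via AM--GM to obtain the stated bound at $a=\sqrt{2\kappa/(\kappa+1)}$. The only cosmetic difference is that you anchor the short jobs at $i(\kappa-\epsilon)+t$ rather than at the integers $j$, which does not affect the analysis.
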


\begin{proof}
To find the lower bound, we still use Example~\ref{ex_g}. Now the priority is linear, so we can represent $f(\delta_i)$ as $(1+a\delta_i)$, here $a$ is a constant non-negative parameter. So in Example~\ref{ex_g}, social welfare obtained by the optimal solution is $\sim (f(0)+f(\frac{1}{\kappa})+,\ldots,+f(\frac{\kappa-1}{\kappa}))\cdot(1+\ldots+f^{p-1}(1))+f^p(1)+f^{p-1}(1)= ((1+\frac{0}{\kappa}\cdot a)+(1+\frac{1}{\kappa}\cdot a)+,\ldots,+(1+\frac{k-1}{k}\cdot a))\cdot(1+(1+a)+,\ldots,+(1+a)^{p-1})+(1+a)^p+(1+a)^{p-1}=(\kappa+\frac{\kappa-1}{2}\cdot a)\cdot (1+(1+a)+,\ldots,+(1+a)^{p-1})+(1+a)^p+(1+a)^{p-1}$. However, social welfare obtained by our mechanism is $\sim f^{p-1}(1)=(1+a)^{p-1}$. Therefore, the competitive ratio of our mechanism is at least $(\kappa+\frac{\kappa-1}{2}\cdot a)\cdot ((1+a)^{-(p-1)}+\ldots+1)+(1+a)+1$, which tends to $(\kappa+\frac{\kappa-1}{2}\cdot a)\cdot\frac{1}{1-(1+a)^{-1}}+(1+a)+1$, when $p \rightarrow \infty$. We use $\beta$ to denote this competitive ratio, i.e. $\beta=(\kappa+\frac{\kappa-1}{2}\cdot a)\cdot\frac{1}{1-(1+a)^{-1}}+(1+a)+1=\frac{\kappa}{a}+(\frac{\kappa+1}{2})a+\frac{3}{2}(\kappa+1)$. Because $a\geq 0$, we have $\beta\geq \sqrt{2\kappa(\kappa+1)}+\frac{3}{2}(\kappa+1)$, and equality holds if and only if $a=\sqrt{\frac{2\kappa}{\kappa+1}}$.
\end{proof}

It is easy to extend this result to the multi-instance case. The proof is very similar and is omitted. We state the theorem here.

\begin{theorem}\label{linear}
Assume $C\geq h\cdot n_{\max}$, where $h\geq 2$ is an integer. The competitive ratio of $\Gamma_G$ with a linear priority function is lower bounded by $\frac{h}{h-1}\cdot (\sqrt{2\kappa(\kappa+1)}+\frac{3}{2}\kappa+\frac{1}{2})+1$.
\end{theorem}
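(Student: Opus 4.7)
The plan is to construct a hard-instance family that is the multi-instance analogue of the example used in Theorem~6, in exactly the same way that Lemma~1 extends the single-instance exponential example to $h$ copies per group. I would take capacity $C=h\cdot n_{\max}$ and use $p+1$ groups of length-$\kappa$ long jobs together with $p\kappa$ groups of length-$1$ short jobs. Each group consists of $h$ identical demand-$n_{\max}$ jobs, with one exception mirroring Lemma~1: the target group $J_{p-1}^{l}$ contains $h-1$ demand-$n_{\max}$ jobs plus one demand-$1$ job, so its aggregate value is $((h-1)n_{\max}+1)(1+a)^{p-1}$. Release times and deadlines would be staggered by the usual small offsets $\epsilon,\mu$ so that every newly arriving group is able to preempt its predecessor under the allocation rule by an arbitrarily small margin.

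The per-instance values would be inherited from Theorem~6 with the abstract $f$ specialized to $f(\delta)=1+a\delta$: the $(j+1)$-th group of short jobs has per-instance value $(1+a)^{\lfloor j/\kappa\rfloor}(1+a(j/\kappa-\lfloor j/\kappa\rfloor))-\delta/\kappa$ and the regular long group $J_i^l$ has per-instance value $(1+a)^i$, with $J_p^l$ and $J_{p-1}^l$ bumped by the small $\delta,\epsilon$ corrections used before. The structural claim to verify is that in $\Gamma_G$ only the target group $J_{p-1}^l$ ultimately completes (social welfare $((h-1)n_{\max}+1)(1+a)^{p-1}$), whereas the offline optimum completes every short job and then finishes $J_p^l$ and $J_{p-1}^l$ back-to-back. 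Summing the geometric series gives $\mathrm{OPT}\sim h\cdot n_{\max}(\kappa+\frac{\kappa-1}{2}a)\frac{(1+a)^p-1}{a}+h\cdot n_{\max}(1+a)^p+((h-1)n_{\max}+1)(1+a)^{p-1}$.

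Dividing by the mechanism's welfare and passing to the limits $p\to\infty$ and $n_{\max}\to\infty$ reduces the ratio to $\frac{h}{h-1}\bigl[\frac{\kappa}{a}+\frac{(\kappa+1)a}{2}+\frac{3\kappa+1}{2}\bigr]+1$; applying AM-GM to $\frac{\kappa}{a}+\frac{(\kappa+1)a}{2}$ (with minimizer $a=\sqrt{2\kappa/(\kappa+1)}$) gives $\sqrt{2\kappa(\kappa+1)}$, so the ratio is at least $\frac{h}{h-1}(\sqrt{2\kappa(\kappa+1)}+\frac{3}{2}\kappa+\frac{1}{2})+1$ as claimed. The only nontrivial step is the combinatorial check that the preemption chain proceeds as intended when each group contains $h$ competing jobs and the target group has non-uniform sizes; this requires redoing the per-step comparison $\frac{v_i}{n_i}(1+a\delta_i)$ at each transition for the linear $f$, but it is a direct adaptation of the bookkeeping already carried out in Lemma~1 and introduces no new phenomena, which is presumably why the authors omit it.
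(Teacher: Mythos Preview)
Your proposal is correct and coincides with the approach the paper has in mind: it explicitly omits the proof, noting that it is a straightforward extension of the single-instance linear case (Theorem~6) using the multi-instance construction of Lemma~1, which is precisely what you carry out. Your algebra is also right; the bracket simplifies to $\frac{\kappa}{a}+\frac{\kappa+1}{2}a+\frac{3\kappa+1}{2}$, and AM--GM at $a=\sqrt{2\kappa/(\kappa+1)}$ yields the stated bound.
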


\subsection{Discussions}
In this subsection, we make some discussions about the results we obtained so far.

First, we compare the performance of $\Gamma_G$ with different priority functions.
With some derivations, one can verify
$$(\sqrt{2\kappa(\kappa+1)}+\frac{3}{2}(\kappa+1))>(\kappa+1)(1+\frac{1}{\kappa})^{\kappa}+1$$ and
$$\frac{h}{h-1}\cdot (\sqrt{2\kappa(\kappa+1)}+\frac{3}{2}\kappa+\frac{1}{2})+1>\frac{h}{h-1}\cdot \frac{\chi}{1-\chi^{-1/\kappa}}+1.$$  This observation suggests that, the greedy mechanism with an exponential priority function performs better than that with a linear priority function.

Second, let us look at a simple model, in which each job $i$ has unit length ($l_i=1$) and needs only one instance ($n_i=1$) to process it. And there are only one instance in the cloud, i.e. $C=n_{\max}=1$. This is a special case in our general model, so all our theorems apply. For this simple case, we have the following corollary.

\begin{corollary}\label{yes5}
For the simple case, $\Gamma_{G}$  with an exponential priority function can achieve a competitive ratio of $5$.
\end{corollary}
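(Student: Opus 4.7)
The plan is to reduce this corollary directly to results already proved in the preceding subsection. The \emph{simple case} in the corollary stipulates $C = n_{\max} = 1$ and $l_i = n_i = 1$ for every job. Since every job has unit length, the maximum-to-minimum length ratio satisfies $\kappa = 1$, and the capacity condition $C = n_{\max} = 1$ is exactly the hypothesis of Theorem~\ref{main2}.

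Given this observation, my first step is to invoke Theorem~\ref{main2}, which guarantees that $\Gamma_G$ with the exponential priority function $f(\delta) = \chi^{\delta}$ has competitive ratio
\[
R(\chi) \;=\; \frac{\chi}{1 - \chi^{-1/\kappa}} + 1 \;=\; \frac{\chi^2}{\chi - 1} + 1
\]
when we substitute $\kappa = 1$. Next I would choose $\chi$ to minimise $R(\chi)$ over $\chi > 1$. A quick calculus exercise on $g(\chi) = \chi^2/(\chi - 1)$ shows $g'(\chi) = \chi(\chi - 2)/(\chi-1)^2$, so the unique critical point in $(1, \infty)$ is $\chi = 2$, at which $g(2) = 4$, giving $R(2) = 5$. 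Alternatively, and even more directly, one can cite Proposition~\ref{exp}, which states that for known $\kappa$ the optimal base is $\chi = ((\kappa+1)/\kappa)^{\kappa}$ with best competitive ratio $(\kappa+1)(1 + 1/\kappa)^{\kappa} + 1$; plugging in $\kappa = 1$ yields $\chi = 2$ and ratio $2 \cdot 2 + 1 = 5$.

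There is essentially no obstacle in this argument: the corollary is a clean specialisation $\kappa = 1$, $C = n_{\max} = 1$ of previously established bounds, combined with the elementary optimisation of a one-variable expression. The only thing worth verifying explicitly is that the hypotheses of Theorem~\ref{main2} and Proposition~\ref{exp} are genuinely met in the simple case, which they are, so the corollary follows immediately.
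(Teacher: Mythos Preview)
Your proposal is correct and matches the paper's own derivation essentially verbatim: the paper also derives the corollary directly from Theorem~\ref{main2} by setting $\kappa=1$ and choosing $\chi=2$ to obtain the ratio $5$. Your additional remark that Proposition~\ref{exp} gives the same conclusion is a fine supplementary observation but not needed.
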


 Corollary~\ref{yes5} can be directly derived from Theorem~\ref{main2}. Note that in the simple case, $\kappa=1$, and we choose $\chi=2$ to have a $5$-competitive mechanism. The results accord with Theorem $8$ in \cite{hajiaghayi2005online}.

\section{A Dynamic Program Based Mechanism}\label{dpm}
The mechanism studied in previous section takes a simple greedy approach to select a set of valuable jobs from all the feasible jobs at each critical time point. It is easy to see that, given the virtual value $v_i'$ and the demanded instances $n_i$ of each feasible job and the capacity $C$ of the cloud, a better approach is to use the dynamic program designed for the knapsack problem to select a set of most valuable jobs. In this section, we design such a mechanism, denoted as $\Gamma_{D}$.

As shown in Algorithm 2, the allocation rule of $\Gamma_D$ is based a dynamic program. Its payment rule is the same as the previous greedy mechanism: charge each agent according to his/her critical value. In the remaining part of this section, we prove a tight competitive bound (Theorem \ref{dp}) for $\Gamma_D$, the first step of which is to lower bound the competitive ratio of the mechanism (Lemma \ref{nmax}) .

\begin{algorithm}[H]
\For{all critical time point $t$ in the ascending order}{
$J_F\leftarrow J_F(t)$\;
\eIf{$J_F\neq\emptyset$}{
For each $i\in J_F$, update the virtue value $v'_i=v_i\cdot\chi^{\delta_i}$\;
Using the dynamic programming algorithm, find the most valuable (in terms of $v'_i$) set of jobs, denoted as $S_t$\;
Run $S_t$\;
}{
Output $\emptyset$;
}
}
\caption{The allocation rule of $\Gamma_D$}
\end{algorithm}

\begin{lemma}\label{nmax}
The competitive ratio of  $\Gamma_D$ is at least $n_{\max}\cdot \frac{\chi}{1-\chi^{-1/\kappa}}+1$.
\end{lemma}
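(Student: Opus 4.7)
The plan is to construct an adversarial job sequence exhibiting the claimed competitive ratio, patterned on the construction in Lemma~\ref{lem1} but tuned for the dynamic-program-based allocation of $\Gamma_D$. The blueprint is to combine a preemption chain of long jobs (which traps $\Gamma_D$ into completing essentially one long job via its knapsack choice) with a steady stream of short jobs (which OPT can harvest in parallel but which $\Gamma_D$ forgoes because keeping the current long job dominates its knapsack value at every critical time point).

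Concretely, I would take $p+1$ long jobs $J_0^l,\dots,J_p^l$ of length $\kappa$, each demanding $n_{\max}$ instances, with values growing geometrically like $\chi^i$, released at times $0,(\kappa-\epsilon),2(\kappa-\epsilon),\dots$ so that each new long job arrives $\epsilon$ before the previous one would finish. By an exact analogue of the tuning in Lemma~\ref{lem1} (choosing values that beat the virtual value of the predecessor by a margin $\delta$), I force DP's knapsack to switch to the new long job at every such release. I give $J_{p-1}^l$ an extended deadline $(p+2)\kappa$ so that, as in Lemma~\ref{lem1}, it is the unique long job that actually completes. At every integer time $j\in\{0,\dots,p\kappa-1\}$ I add short jobs whose collective value is just below the virtual value of the currently running long job; I arrange them so that OPT, which need not commit the $n_{\max}$ instances to a long job, can run $n_{\max}$ of them in parallel per time step, harvesting value of order $n_{\max}\chi^{j/\kappa}$ per step, whereas $\Gamma_D$'s knapsack strictly prefers the long job (with some tie-breaking slack) and drops the shorts that do not fit alongside it. This is where the $n_{\max}$ factor enters: every time unit in which $\Gamma_D$'s $n_{\max}$ instances are committed to the long-job chain is a time unit in which OPT earns $n_{\max}$ units of short-job value that $\Gamma_D$ loses.

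The calculation then proceeds exactly as in Lemma~\ref{lem1}. $\Gamma_D$'s social welfare is dominated by the single completed job, proportional to $\chi^{p-1}$, while OPT's social welfare is of the form $n_{\max}\sum_{j=0}^{p\kappa-1}\chi^{j/\kappa}$ plus the contributions of $J_p^l$ and $J_{p-1}^l$. Summing the geometric series, using the $J_p^l$ term to convert $\frac{\chi}{\chi^{1/\kappa}-1}$ into $\frac{\chi}{1-\chi^{-1/\kappa}}$ exactly as at the end of Lemma~\ref{lem1}, and taking $p\to\infty$ (with $\epsilon$ and $\delta$ shrinking faster), the ratio OPT$/\Gamma_D$ approaches $n_{\max}\cdot\frac{\chi}{1-\chi^{-1/\kappa}}+1$.

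The hard part, and the only non-routine step, will be verifying the knapsack inequality at every critical time point: because $\Gamma_D$ selects the \emph{globally optimal} feasible subset rather than the greedy-density one used by $\Gamma_G$, I must check that the total virtual value of the long-job selection always strictly exceeds the total value of every feasible alternative set of shorts, not just the per-instance density comparison that suffices for Lemma~\ref{lem1}. This imposes a joint constraint on the long-job values, the short-job values, and the per-time-step count of shorts; once this is carefully set up, the asymptotic calculation is formally identical in structure to that of Lemma~\ref{lem1}.
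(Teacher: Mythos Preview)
Your construction has a genuine gap that cannot be closed by parameter tuning alone. You write that at each integer time $j$ you add short jobs whose \emph{collective} value is just below the current long job's virtual value, and simultaneously that OPT harvests value of order $n_{\max}\chi^{j/\kappa}$ per step. These two claims are incompatible. If the shorts' total value at step $j$ is below the long job's virtual value $V_j$, then OPT's gain from those shorts is at most $V_j$; dividing by $\Gamma_D$'s completed long-job value yields only $\frac{\chi}{1-\chi^{-1/\kappa}}+1$, with the $n_{\max}$ factor cancelling. Conversely, if each short job individually carries value $\approx\chi^{j/\kappa}$ so that OPT earns $n_{\max}\chi^{j/\kappa}$ per step, then when they arrive together their total is $n_{\max}$ times the long job's virtual value and the knapsack DP will switch to the shorts, destroying the preemption chain. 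Your ``joint constraint'' paragraph correctly locates the difficulty, but it is not a matter of tuning per-step counts or values.

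The missing idea, which the paper supplies, is a \emph{temporal staggering} of the short jobs. The paper takes $C=h\,n_{\max}$, fills capacity with $h$ long jobs each of value $\chi^i$ (not $n_{\max}\chi^i$), and releases $h\,n_{\max}$ unit-size, zero-laxity short jobs per integer step $j$, each of value $\approx\chi^{j/\kappa}$, but with release times offset from one another by tiny multiples of $\delta$. Because each short job has zero laxity, at every critical time point the DP sees exactly \emph{one} newly feasible short job; swapping out a long job (virtual value $\approx\chi^{j/\kappa}$) for that single short (value $\chi^{j/\kappa}-\text{tiny}$) is never profitable, so $\Gamma_D$ keeps the long jobs and in the end collects only $h\chi^{p-1}$. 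OPT, however, starts each short job the instant it arrives; the staggered unit-length shorts overlap and fill all $C$ instances, so OPT collects $\approx h\,n_{\max}\sum_j\chi^{j/\kappa}$. The $h$ cancels in the ratio and the $n_{\max}$ survives. The staggering device is precisely what hides the shorts' aggregate value from the DP while leaving it fully available to OPT, and it is the step your proposal lacks.
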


\begin{proof}
We prove this by an example. For the convenience of analysis, we assume $\kappa$ is an integer and $l_{\min}=1$. In our example, $C=h\cdot n_{\max}$, and there are two types of jobs: long and short. The length of long jobs is $\kappa$, while the length of short jobs is $1$.

The long jobs are released by groups. Let $p$ be a large integer, $\epsilon$ and there are $p$ groups of long jobs.

The first ``long-job'' group (denoted as $J_0^l$) consists of $h$ long jobs whose types are $\theta_0^l=(0, \kappa, n_{\max}, \kappa, 1)$.

The $(i+1)$-th ``long-job'' group (denoted as $J_i^l$) consists of $h$ long jobs whose types are $\theta_i^l=(i(\kappa-\epsilon), (i+1)\kappa, n_{\max}, \kappa, \chi^i)$, here $p-2 \geq i \geq 1$.

The $p$-th ``long-job'' group (denoted as $J_{p-1}^l$) consists of $h$ long jobs whose type are $\theta_i^l=((p-1)(\kappa-\epsilon), (p+2)\kappa, n_{\max}, \kappa, \chi^i)$.

Here, $\epsilon$ are small constants satisfying $p\epsilon \ll 1$.

The short jobs are released by queues, and there are $h\cdot n_{\max}$ queues of short jobs. In each queue, there are $p\kappa$ short jobs released one by one.

In the $k$-th ``short-job'' queue (denoted as $J_j^s$), we have such jobs:
the $(j+1)$-th short job in the $k$-th ``short-job'' queue is $\theta_{kj}^s=(1-p\epsilon-k\delta+j, 1-p\epsilon-k\delta+j+1, 1, 1, \chi^{\frac{1-p\epsilon-k\delta+j}{\kappa}}-\frac{\delta}{p\kappa})$, for $j=0,1,\ldots,p\kappa-1$.

Here, $h\cdot n_{\max} \cdot \delta \ll \epsilon$.

It can be verified that only group $J_{p-1}^l$ can be completed in our mechanism, with a social welfare $\sim h\cdot \chi^{p-1}$. While in optimal solution, all the short jobs will be completed, and after that, group $J_{p-1}^l$ will be completed successively, with a social welfare
$\sim h\cdot n_{\max}(1+\chi^{1/\kappa}+,\ldots,+\chi^{p-1/\kappa}+\chi^{p})+ h\cdot \chi^{p-1}$. Therefore, the competitive ratio of our mechanism is at least $\sim n_{\max}\cdot (\chi^{-p+1}+\chi^{-p+1+1/\kappa}+,\ldots,+\chi^{1-1/\kappa}+\chi)+ 1$, which tends to $n_{\max}\cdot \frac{\chi}{1-\chi^{-1/\kappa}}+1$, when $p \rightarrow \infty$.
\end{proof}

\begin{theorem}\label{dp}
The mechanism $\Gamma_D$ has a competitive ratio of $n_{\max}\cdot\frac{\chi}{1-\chi^{-\frac{1}{\kappa}}}+1$.
\end{theorem}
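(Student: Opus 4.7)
Lemma~\ref{nmax} already supplies the matching lower bound, so only the upper direction remains: for every sequence, $\mathrm{OPT}\le \bigl(n_{\max}\cdot\chi/(1-\chi^{-1/\kappa})+1\bigr)\cdot W(\Gamma_D)$. My strategy is to adapt the charging argument of Lemma~\ref{lem2} to the dynamic-programming rule, replacing the greedy size-ordering and the ``line $\ell$'' device with an appeal to DP optimality. For each OPT-winning job $i$ with OPT start time $a_i$, group $i$ into the batch $A_t:=\{i:a_i=t\}$. Because OPT is feasible, $\sum_{i\in A_t}n_i\le C$, and because each $i\in A_t$ satisfies $d_i-t\ge l_i$, the batch $A_t$ is itself a legitimate candidate selection for $\Gamma_D$ at $t$ (with each job freshly started). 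DP optimality of $S_t$ then yields the clean inequality
\[
\sum_{i\in A_t}v_i \;\le\; \sum_{j\in S_t} v'_j(t).
\]
I distribute each $v_i$ across $S_t$ in proportion to $v'_j(t)$, so that the aggregate charge received by any single $j\in S_t$ from the batch at $t$ is at most $v'_j(t)$.

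If a temporarily charged job $j\in S_t$ is later dropped from $\Gamma_D$'s selection at time $s$, DP optimality applied to the singleton $\{j\}$ (still feasible at $s$) gives $v'_j(s^-)\le \sum_{j'\in S_s}v'_{j'}(s)$, so I can forward $j$'s accumulated charge to $S_s$ proportionally to $v'(s)$ while preserving the per-job bound. Iterating these transfers, every temporary charge eventually settles on a job that $\Gamma_D$ completes. Now fix a completed job $j^*$ and trace the chain of interruptions terminating at $j^*$. The exponential priority $\chi^\delta$ causes each backward temporal hop of length $\sigma$ to contribute a factor $\chi^{-\sigma/\kappa}$ (since virtual values grow by at most $\chi^{\sigma/\kappa}$ during $\sigma$ units of processing on any job of length $\le\kappa$, exactly as in the second chain computation of Lemma~\ref{lem2}). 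Consecutive OPT-start events along the chain are separated by at least $l_{\min}=1$, so the tail sums as the geometric series $\sum_{\sigma\ge 1}\chi^{-\sigma/\kappa}=\chi^{-1/\kappa}/(1-\chi^{-1/\kappa})$. The factor $n_{\max}$ arises from the per-slice multiplicity: one instance slot occupied by $j^*$ in $\Gamma_D$ can, in OPT, be replaced by up to $n_{j^*}\le n_{\max}$ unit-instance OPT jobs, inflating the per-slice count of OPT contributions charged to that slot by a factor of at most $n_{\max}$. Adding the self-charge $v_{j^*}$ (when $j^*\in\mathrm{OPT}$) to the geometric tail yields
\[
\text{total charge to }j^* \;\le\; \Bigl(n_{\max}\cdot\tfrac{\chi}{1-\chi^{-1/\kappa}}+1\Bigr)\,v_{j^*},
\]
and summing over $j^*\in\mathrm{ALG}$ produces the claimed competitive ratio.

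\textbf{Main obstacle.} The delicate step is producing the exact factor $n_{\max}$ rather than a coarser quantity such as $C$ or $|S_t|$. In Lemma~\ref{lem2} the size-ordering and the line at capacity $(h-1)C/h$ make this transparent, but DP selection offers no built-in size structure, so I must combine the per-batch DP inequality with the individual bound $n_i\le n_{\max}$ and argue that along every interruption chain the number of OPT contributions charged to any single instance-slot of $j^*$ is controlled by $n_{\max}$. Making this bookkeeping tight — especially ensuring that chain transfers across jobs of varying lengths neither double-count contributions nor inflate per-slice charges — is where the bulk of the technical work will lie.
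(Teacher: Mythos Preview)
Your charging architecture is right, but the step that should produce the $n_{\max}$ factor does not work as written, and the gap is precisely the ``main obstacle'' you flag. Two things go wrong. First, the assertion that ``consecutive OPT-start events along the chain are separated by at least $l_{\min}=1$'' is false once $C>1$: OPT runs jobs in parallel on disjoint instance blocks, so distinct OPT start times can be arbitrarily close, and the number of batches $A_t$ feeding into a fixed chain is not bounded by the length of that chain. Second, the DP-optimality inequality $\sum_{i\in A_t}v_i\le\sum_{j\in S_t}v'_j(t)$ is too coarse: it gives only a total bound per batch, not one that scales with $n_i$, so you cannot later combine it with the packing constraint $\sum_i n_i\le C$ to extract $n_{\max}$. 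Your ``per-slice multiplicity'' fix appeals to instance slots, but your distribution of charge is value-proportional, not instance-proportional, so that bookkeeping does not connect to the quantities you are tracking.

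What the paper actually supplies here is a structural claim about the DP-optimal set that goes well beyond mere optimality: whenever a feasible job $i$ is excluded, the running set satisfies $\sum_{j}v'_j\ge \frac{C}{n_i\,n_{\max}}\,v_i$, not merely $\ge v_i$. The proof is combinatorial: one groups the jobs in $S_t$ of size $<n_i$ into bundles of total size $\ge n_i$, and observes that each such bundle (and each individual job of size $\ge n_i$) must have virtual value $\ge v_i$, else swapping it out for $i$ would improve the knapsack; counting bundles yields the factor $C/(n_i n_{\max})$. With this per-$i$ inequality, each OPT job $i$ deposits at most $\frac{n_i n_{\max}}{C}v'_j$ along the chain, and then summing over OPT jobs that are simultaneously alive in OPT (so $\sum_i n_i\le C$) produces exactly $n_{\max}$ per unit of chain length, after which the geometric series closes as in Lemma~\ref{lem2}. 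This bundle-counting claim is the missing ingredient in your plan.
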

\begin{proof}
From Lemma~\ref{nmax}, we know that $n_{\max}\cdot\frac{\chi}{1-\chi^{-\frac{1}{\kappa}}}+1$ is a competitive lower bound of $\Gamma_D$,
we now prove that it is also an upper bound.
We will still charge the values of winning jobs in an optimal solution
OPT to winning jobs in our algorithm.

For any winning agent $i$ in OPT, if she is also a winner in our algorithm,
then her value is charged to herself.
Otherwise, consider the time $t$ at which $i$ is allocated the instances in OPT. At this time,
our algorithm allocates at least $C-n_i+1$ instances to other jobs,
since $i$ is not allocated. We use $J_i$ to denote the set of jobs
that are active at time $t$ in our algorithm. We give the following claim.
\paragraph{Claim 1}
Jobs in $J_i$ have a total value (in terms of $v'_j$) of at least $\frac{C}{n_in_{\max}}v_i$.

We first prove this claim. We use $V_{i}$ to denote this total value. It is clear that when $n_i\geq\sqrt{C}$ the conclusion holds, since
$V_{i}\geq v_i$ and $\frac{C}{n_in_{\max}}\leq1$. So in the following, we assume $n_i<\sqrt{C}$.
When $n_i=1$, then our algorithm allocates all the instances to $J_i$, and there are at least $\lceil \frac{C}{n_{\max}}\rceil$ jobs in $J_i$,
since otherwise $i$ will be allocated. Besides, each job $j\in J_i$ has a value no less that $v_i$ (in terms of $v'_j$). Otherwise $j$ will
be replaced by $i$. Therefore $V_{i}\geq \lceil \frac{C}{n_{\max}}\rceil v_i\geq \frac{C}{n_{\max}}v_i=\frac{C}{n_in_{\max}}v_i$.

When $n_i\geq2$, we assume that there are $\eta$ jobs in $J_i$ whose size is no smaller than $n_i$. Each of these job has a value greater than $v_i$.
There are at least $C-n_i+1-\eta\cdot n_{\max}$ instances allocated to jobs whose size is smaller than $n_i$. Since all the $n_j$s are integer,
there are at least $\lceil\frac{C-n_i+1-\eta\cdot n_{\max}}{n_i-1}\rceil$ small jobs.
We can combine these small jobs to at least $\lceil\frac{C-n_i+1-\eta\cdot n_{\max}}{2(n_i-1)}\rceil$ large jobs, each has a size larger than $n_i$
We make this combination in the following way: Giving each job a label which from $1$ to $\lceil\frac{C-n_i+1-\eta\cdot n_{\max}}{n_i-1}\rceil$.
Starting from the first job, we use as few jobs as possible to combine them to a large job which has size no less than $n_i$. Each time
we have a waste of at most $n_i-2$ size, since every small job has a size no more than $n_i-1$.
Therefore we get at least $\lceil\frac{C-n_i+1-\eta\cdot n_{\max}}{2(n_i-1)}\rceil$ large jobs, and each has a value larger than $v_i$,
which implies that $V_i\geq\lceil\frac{C-n_i+1-\eta\cdot n_{\max}}{2(n_i-1)}+\eta\rceil v_i$. If $2(n_i-1)\leq n_{\max}$,
then $V_i\geq\lceil\frac{C-n_i+1-\eta\cdot n_{\max}}{2(n_i-1)}+\eta\rceil v_i\geq \lceil\frac{C-\frac{1}{2}n_{\max}}{n_{\max}}\rceil v_i\geq\lceil\frac{C}{2n_{\max}}\rceil v_i\geq\lceil\frac{C}{n_in_{\max}}\rceil v_i$. Otherwise, if $2(n_i-1)>n_{\max}$, then $V_i\geq \lceil\frac{C-n_i+1}{2(n_i-1)}\rceil v_i\geq\lceil\frac{C}{n_in_{\max}}\rceil v_i$, since $(C-n_i+1)n_in_{\max}\geq n_in_{\max}C-2C(n_i-1)\geq 2C(n_i-1)$, where the first inequality holds by $\frac{n_{\max}}{2}+1<n_i<\sqrt{C})$, and
the second inequality holds by $n_{\max}\geq n_i$. This complete the proof of Claim 1.~\qed\\

We continue the proof of Theorem~\ref{dp}.
We first temporarily charge the value of $i$ to all the jobs in $J_{i}$ in proportion with their values, and each job $j\in J_{i}$ is temporarily charged $\frac{v'_j}{V_i}v_i$ ($\leq \frac{n_in_{\max}}{C}v'_j$ by Claim 1).
A job $j\in J_{i}$ might be interrupted in our algorithm. If she is not interrupted, then it is finally charged $\frac{v'_j}{V_i}v_i$.
If she is interrupted at time $t'$, we use $J_{int}$ to denote all these jobs that are interrupted at $t'$.
We then pass all the temporary charge of $j\in J_{int}$ to jobs that are newly allocated at $t'$ also in proportion with their values, and
other jobs in $J_{i}\backslash J_{int}$ keep their temporary charge.
Note that by the dynamic programming algorithm the total value of new allocated jobs is no less than
that of the interrupted jobs.
Therefore, after the interruption, each job $j$ has a temporary charge of at most $\frac{n_in_{\max}}{C}v'_j$.
We continue this chain until all the temporary charge is finally charged.
We now calculate the maximum total value charged to a agent $j$ with
value $v_j$ who wins at time $t$ in our algorithm.

If job $j$ is completed in $OPT$, there is a charge of $v_j$. Divide all jobs in $OPT$ whose value
is charged to $j$ to different groups according to their start time in $OPT$ by the following rule:

Consider a job $i$ in $OPT$ whose value is charged to $j$. Let $t' = t- \sigma_i$ be the time at which job $i$ receives an allocation in $OPT$, then we say $i$ is in group $\sigma_i$.
It is clear from the mechanism that $\sigma_i > -l_j$.

When $\sigma_i\leq 0$, it is easy to see that, the value charged to $j$ by job $i$ is at most $\chi^{-\sigma_i/l_j}\cdot\frac{n_in_{\max}}{C}v_j$. Thus, the total charge from group $\sigma_i$ is at most $\sum_{i}\frac{n_in_{\max}}{C}v_j\chi^{-\sigma_i/l_j}\leq n_{\max}\chi^{-\sigma_i/l_j}v_j$, since if we use $Number_i$ to denote the number of jobs whose size is
$n_i$ in group $\sigma_i$, then $\sum^{n_{\max}}_{n_i=1}n_i\cdot Number_i\leq C$, which implies our inequality.

When $\sigma_i>0$, similar to the proof of Theorem~\ref{main1} we can that the value charged to $j$ by job $i$ is at most $\chi^{-\sigma_i/\kappa}\cdot\frac{n_in_{\max}}{C}v_j$, and the total charge from group $\sigma_i$ is at most $\sum_{i}\frac{n_in_{\max}}{C}v_j\chi^{-\sigma_i/\kappa}\leq n_{\max}\chi^{-\sigma_i/\kappa}v_j$
Also, the value of $\sigma_i$ for any two such groups must be apart by at least $l_{min}=1$, so $\sigma_i=-l_j+i\cdot 1$, for $i=0,1,2,\ldots,+\infty$. Therefore, the total charge to $j$ is at most
\begin{align}
\nonumber
&v_j+n_{\max}\cdot \sum_{i: \sigma_i=0}^\infty \chi^{-\frac{\sigma_i}{\kappa}} v_j + n_{\max}\cdot \sum_{i:\sigma_i=-\kappa}^{i:\sigma_i=-1} \chi^{-\frac{\sigma_i}{l_j}} v_j\\
\nonumber
\leq& v_j+n_{\max}\cdot \sum_{i: \sigma_i\geq0}^\infty \chi^{-\frac{\sigma_i}{\kappa}} v_j + n_{\max}\cdot \sum_{i: \sigma_i<0} \chi^{-\frac{\sigma_i}{\kappa}} v_j\\
\nonumber
=&(n_{\max}\cdot \frac{\chi}{1-\chi^{-\frac{1}{\kappa}}}+1)v_j.
\end{align}

This shows that our algorithm is $(n_{\max}\cdot \frac{\chi}{1-\chi^{-\frac{1}{\kappa}}}+1)$-competitive.~\qed
\end{proof}

\begin{remark}
As shown in Lemma \ref{lem3}, the competitive ratio of the greedy mechanism $\Gamma_G$ with an exponential priority function can be arbitrarily bad if there exists some agent with $n_i\rightarrow C$. That is, its worst case performance is not guaranteed. In contrast, the competitive ratio of $\Gamma_D$ is always upper bounded; in particular, it is much better than $\Gamma_G$ with an exponential priority function when the demand $n_i$ of some agent is very large and close to $C$. This is consistent with our intuition, since $\Gamma_D$ leverages a more complex approach to select the most valuable jobs at each critical time point.
\end{remark}

\section{Conclusion and future work}
In this paper, we have studied the problem of online mechanism design for resource allocation and pricing in cloud computing (RAPCC).  We have shown that the optimal online allocation for RAPCC is NP-hard. Then two kinds of DSIC online mechanisms have been designed: the first one, which is based on a greedy allocation rule, is very fast and has a tight competitive bound; and the second one, which is based on a dynamic program for allocation, is relatively computationally expensive but with a better competitive bound when the maximum demand $n_{\max}$ of agents is close to the supply (i.e., the capacity $C$) of the cloud provider.

There are many aspects to explore about online mechanisms for RAPCC in the future. First, in this paper we have focused on deterministic online mechanisms. We would like to explore randomized mechanisms in the future. Second, we have ignored agents' temporal strategies and assumed that they will not misreport their arrival times and deadlines. We will take the temporal strategies into consideration. Third, in cloud computing, both the supply (e.g., different kinds of virtual machines) of the provider and the demands of cloud customers can be heterogeneous, which is beyond our current setting. We will study online mechanisms for such complex settings.

\bibliographystyle{acmsmall}
\bibliography{cc}

\end{document}